\newtheorem{theorem}{Theorem}[section]
\newtheorem{lemma}[theorem]{Lemma}
\newtheorem{problem}[theorem]{Question}
\newtheorem{corollary}[theorem]{Corollary}
\newtheorem{proposition}[theorem]{Proposition}
\theoremstyle{definition}
\newtheorem{definition}[theorem]{Definition}
\newtheorem{example}[theorem]{Example}
\newtheorem{note}[theorem]{Note}
\theoremstyle{remark}
\begin{document}

\title[Sequences, polynomials and operator orderings]
{From sequences to polynomials and back, via operator orderings}

\author[]{Tewodros Amdeberhan}
\address{Department of Mathematics,
Tulane University, New Orleans, LA 70118}
\email{tamdeber@tulane.edu}

\author[]{Valerio De Angelis}
\address{Department of Mathematics,
Xavier University of Louisiana, New Orleans, LA 70125}
\email{vdeangel@xula.edu}

\author[]{Atul Dixit}
\address{Department of Mathematics,
Tulane University, New Orleans, LA 70118}
\email{adixit@tulane.edu}

\author[]{Victor H. Moll}
\address{Department of Mathematics,
Tulane University, New Orleans, LA 70118}
\email{vhm@tulane.edu}

\author[]{Christophe Vignat}
\address{Department of Mathematics,
Tulane University, New Orleans, LA 70118 and L.S.S. Supelec, Universite 
d'Orsay, France}
\email{vignat@tulane.edu}

\subjclass[2010]{Primary 33C45}

\date{March 4, 2013}

\keywords{continuous Hahn polynomials, Euler numbers,  Eulerian numbers, 
hypergeometric functions, ordering, 
orthogonal polynomials, pyramids, Weyl algebra}

\begin{abstract}
C. M. Bender and G. V. Dunne showed that linear combinations of 
words $q^{k}p^{n}q^{n-k}$, where $p$ and $q$ are subject to the 
relation $qp - pq = \imath$, may be expressed as a polynomial in the 
symbol $z = \tfrac{1}{2}(qp+pq)$. Relations between such  polynomials and 
linear combinations of the transformed coefficients
are explored. In particular, examples yielding orthogonal polynomials 
are provided. 
\end{abstract}

\maketitle

\newcommand{\ba}{\begin{eqnarray}}
\newcommand{\ea}{\end{eqnarray}}
\newcommand{\ift}{\int_{0}^{\infty}}
\newcommand{\nn}{\nonumber}
\newcommand{\no}{\noindent}
\newcommand{\lf}{\left\lfloor}
\newcommand{\rf}{\right\rfloor}
\newcommand{\realpart}{\mathop{\rm Re}\nolimits}
\newcommand{\imagpart}{\mathop{\rm Im}\nolimits}

\newcommand{\op}[1]{\ensuremath{\operatorname{#1}}}
\newcommand{\pFq}[5]{\ensuremath{{}_{#1}F_{#2} \left( \genfrac{}{}{0pt}{}{#3}
{#4} \bigg| {#5} \right)}}

\newtheorem{Definition}{\bf Definition}[section]
\newtheorem{Thm}[Definition]{\bf Theorem}
\newtheorem{Example}[Definition]{\bf Example}
\newtheorem{Lem}[Definition]{\bf Lemma}
\newtheorem{Cor}[Definition]{\bf Corollary}
\newtheorem{Prop}[Definition]{\bf Proposition}
\numberwithin{equation}{section}

\section{Introduction}
\label{sec-intro}

Operator algebras provide a mathematical setting upon which many physical 
theories are built. Stellar among these theories is quantum mechanics 
with operator formalism at its core heart. 
The transition from Classical to Quantum Mechanics includes replacing 
the position and momentum by operators 
$p$ and $q$ acting on a function $f$ by 
\begin{equation}
pf = x f(x) \text{ and } qf = \imath \frac{df}{dx},
\end{equation}
\noindent
called the annihilation and creation operators, respectively.  The 
canonical commutation relation of these operators is 
\begin{equation}
\left[ q, p \right] := qp-pq = \imath, \text{ with } \imath = \sqrt{-1}.
\label{noncomm}
\end{equation}
\noindent
Non-commutativity is a common feature in mathematical modeling of reality 
which, in quantum mechanics, introduces the so-called 
Heisenberg-Weyl algebra. This new quality does not come without 
a price $-$ the order of components in operator successions is now relevant 
and has to be carefully
traced in calculations. A traditional solution to this problem is to 
standardize the notation by fixing the order of operators; that is, to use the 
normally ordered expansion in powers of the form $q^kp^j$, in which all 
creation operators stand 
to the left of the annihilation operators. 
 
A \textit{word} in 
the letters $p$'s  and $q$'s is called \textit{balanced} if it contains 
the same number of $p$ and $q$. Theorem \ref{thm-balanced} shows 
that every balanced 
word has a representation as a polynomial in $z = \tfrac{1}{2}(qp+pq)$. 
C. M. Bender and G. V. Dunne \cite{bender-1988a} studied operators in 
symmetrized form $(a_{n,k} = a_{n,n-k}^{*}$, where $*$ denotes complex
conjugation),
\begin{equation}
\sum_{k=0}^{n} a_{n,k} q^{k}p^{n}q^{n-k},
\end{equation}
\noindent
and \cite{bender-1988a} refers the sequence $\{ a_{n,k} \}$ as a 
\textit{pyramid}. In 
this special setup, Theorem \ref{thm-balanced} associates the polynomial 
\begin{equation}
P_{n}(z) = \sum_{k=0}^{n} b_{n,k} z^{k} 
\end{equation}
\noindent
to the sequence $\{ a_{n,k} \}$. The 
relation between $\{ a_{n,k} \}$ and $\{ b_{n,k} \}$ is explicitly given in 
Theorems \ref{formula-an} and \ref{formula-pn} by 
\begin{equation}
a_{n,k} = \frac{1}{\imath^{n} n!}
\sum_{j=0}^{n-k} (-1)^{n-k-j} \binom{n+1}{n-k-j}
\sum_{r=0}^{n} b_{n,r} \imath^{r} \left( j + \tfrac{1}{2} \right)^{r} 
\text{ for } 0 \leq k \leq n,
\label{formula-2a}
\end{equation}
\noindent
and 
\begin{equation}
b_{n,k} = (-1)^{k} \imath^{n+k} 
\sum_{\ell=0}^{n} a_{n,\ell} \sum_{j=0}^{n-k} s(n,j+k) \binom{j+k}{k} 
\left( \ell - \tfrac{1}{2} \right)^{j} \text{ for } 0 \leq k \leq n.
\label{formula-1a}
\end{equation}
\noindent
Here $s(n,k)$ is the Stirling number of the first kind.  \\

Some results pertinent to these two sequences include: 

\begin{proposition}
\label{prop-monic1}
The polynomial $P_{n}(z)$ is monic if and only if $\{ a_{n,k} \}$ is 
normalized by $a_{n,0} + a_{n,1} + \cdots + a_{n,n} = 1$.
\end{proposition}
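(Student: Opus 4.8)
The plan is to read off the leading coefficient $b_{n,n}$ of $P_{n}$ directly from the inversion formula \eqref{formula-1a} and to show that it equals the sum $a_{n,0}+a_{n,1}+\cdots+a_{n,n}$. The proposition then follows at once, since $P_{n}(z)=\sum_{k=0}^{n}b_{n,k}z^{k}$ is monic precisely when its top coefficient satisfies $b_{n,n}=1$.

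First I would specialize \eqref{formula-1a} to $k=n$. The inner summation index $j$ then ranges only over $0\le j\le n-k=0$, so only the term $j=0$ survives and the double sum collapses to a single sum over $\ell$. In this surviving term the power $(\ell-\tfrac12)^{j}$ becomes $(\ell-\tfrac12)^{0}=1$, the binomial coefficient is $\binom{n}{n}=1$, and the Stirling number of the first kind is $s(n,n)=1$. This yields
\begin{equation}
b_{n,n}=(-1)^{n}\,\imath^{\,2n}\sum_{\ell=0}^{n}a_{n,\ell}.
\end{equation}

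Next I would simplify the prefactor. Since $\imath^{2n}=(-1)^{n}$, we have $(-1)^{n}\imath^{2n}=(-1)^{n}(-1)^{n}=(-1)^{2n}=1$, and therefore $b_{n,n}=\sum_{\ell=0}^{n}a_{n,\ell}$. Consequently $P_{n}$ is monic, i.e. $b_{n,n}=1$, if and only if $a_{n,0}+a_{n,1}+\cdots+a_{n,n}=1$, which is exactly the asserted equivalence.

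The computation is short, so there is no serious obstacle; the only points requiring care are the collapse of the $j$-sum to its single nonzero term at $k=n$ and the two elementary identities $s(n,n)=1$ and $\imath^{2n}=(-1)^{n}$. As an independent sanity check, one can reason at the level of leading order in $p$ and $q$: treating the two operators as commuting, each balanced word reduces to $q^{k}p^{n}q^{n-k}=q^{n}p^{n}=(qp)^{n}=z^{n}$, so every word contributes $1$ to the coefficient of $z^{n}$ and the total leading coefficient is again $\sum_{k}a_{n,k}$, in agreement with the value of $b_{n,n}$ obtained above.
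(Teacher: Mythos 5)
Your proof is correct and is essentially the paper's own argument: both amount to reading off the leading coefficient of $P_{n}$ from Theorem \ref{formula-pn} and finding that it equals $a_{n,0}+a_{n,1}+\cdots+a_{n,n}$. The only cosmetic difference is that you specialize the expanded form \eqref{formula-1a} at $k=n$ (using $s(n,n)=1$, $\binom{n}{n}=1$ and $\imath^{2n}=(-1)^{n}$), whereas the paper extracts the leading coefficient $(-\imath)^{n}/n!$ of each $\binom{-\imath z-\tfrac{1}{2}+k}{n}$ directly from the product form \eqref{poly-nice2}.
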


\begin{theorem}
\label{poly-real1}
The coefficients $\{ b_{n,r} \}$ of $P_{n}(z)$ 
are real if and only if the coefficients $\{ a_{n,k} \}$ are 
hermitian-symmetric; that is if $a_{n,k} = a_{n,n-k}^{*}$. 
\end{theorem}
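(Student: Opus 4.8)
The plan is to bypass the explicit formulas \eqref{formula-2a} and \eqref{formula-1a} entirely and argue from the single operator identity that underlies them. By Theorem~\ref{thm-balanced} and the discussion following it, the pyramid and its associated polynomial are the \emph{same} element of the Weyl algebra, namely
\[
A_{n} := \sum_{k=0}^{n} a_{n,k}\, q^{k}p^{n}q^{n-k} = \sum_{r=0}^{n} b_{n,r}\, z^{r} = P_{n}(z).
\]
I would introduce the unique conjugate-linear anti-automorphism $\dagger$ of the Weyl algebra determined by $p^{\dagger}=p$ and $q^{\dagger}=q$, so that $(uv)^{\dagger}=v^{\dagger}u^{\dagger}$ and $(\lambda u)^{\dagger}=\lambda^{*}u^{\dagger}$ for scalars $\lambda$. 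The entire theorem then collapses to the observation that $A_{n}$ is fixed by $\dagger$ exactly when the $a_{n,k}$ are hermitian-symmetric, while $P_{n}(z)$ is fixed by $\dagger$ exactly when the $b_{n,r}$ are real; since $A_{n}=P_{n}(z)$ these two conditions must coincide.

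The key steps would run as follows. First, I would verify that $\dagger$ is well defined by checking that it is compatible with the defining relation: applying $\dagger$ to $qp-pq=\imath$ yields $pq-qp=-(qp-pq)$ on the left and $\imath^{*}=-\imath$ on the right, which agree, so $\dagger$ descends to the algebra. Next, a term-by-term computation gives $A_{n}^{\dagger}=\sum_{k} a_{n,k}^{*}\, q^{n-k}p^{n}q^{k}$, and re-indexing $k\mapsto n-k$ rewrites this as $\sum_{k} a_{n,n-k}^{*}\, q^{k}p^{n}q^{n-k}$. Hence $a_{n,k}=a_{n,n-k}^{*}$ for all $k$ forces $A_{n}^{\dagger}=A_{n}$, and conversely $A_{n}^{\dagger}=A_{n}$ returns the symmetry once one compares coefficients of the balanced words $q^{k}p^{n}q^{n-k}$. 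On the other side, $z^{\dagger}=\tfrac12(p^{\dagger}q^{\dagger}+q^{\dagger}p^{\dagger})=\tfrac12(pq+qp)=z$, so $z$ is self-adjoint and $P_{n}(z)^{\dagger}=\sum_{r} b_{n,r}^{*}\, z^{r}$; thus $b_{n,r}\in\mathbb{R}$ for all $r$ forces $P_{n}(z)^{\dagger}=P_{n}(z)$, and the converse follows by comparing coefficients of the powers $z^{r}$.

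Chaining these equivalences through $A_{n}=P_{n}(z)$ gives the theorem: if the $b_{n,r}$ are real then $P_{n}(z)^{\dagger}=P_{n}(z)$, hence $A_{n}^{\dagger}=A_{n}$, hence the $a_{n,k}$ are hermitian-symmetric; and the implications reverse verbatim. The main (and only genuine) obstacle is the two coefficient-comparison steps, each of which rests on a linear-independence statement: the reverse direction needs the powers $1,z,\dots,z^{n}$ to be linearly independent, and the forward direction needs the balanced words $q^{k}p^{n}q^{n-k}$, $0\le k\le n$, to be linearly independent. The former is transparent in the differential representation $p=x\cdot$, $q=\imath\, d/dx$, where $z=\imath\bigl(x\tfrac{d}{dx}+\tfrac12\bigr)$ is first order and $z^{r}$ has order exactly $r$; the latter follows from the bijectivity of the $a\leftrightarrow b$ correspondence recorded in Theorems~\ref{formula-an}--\ref{formula-pn}.

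As a cross-check one could instead argue directly from \eqref{formula-1a}, conjugating term by term, using that $s(n,\cdot)$, the binomial coefficients and $(\ell-\tfrac12)^{j}$ are all real, and then substituting $\ell\mapsto n-\ell$ via $a_{n,\ell}^{*}=a_{n,n-\ell}$. That route, however, would require establishing a reflection identity for the Stirling-number sums $\sum_{j} s(n,j+k)\binom{j+k}{k}(\ell-\tfrac12)^{j}$ under $\ell\mapsto n-\ell$, precisely the combinatorial labor the operator argument is designed to avoid.
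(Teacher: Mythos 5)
Your proof is correct, and it takes a genuinely different route from the paper's. The paper (proof of Theorem \ref{poly-real}, Section \ref{sec-relations}) stays at the level of polynomials in a real variable $x$: it conjugates the expansion $P_{n}(x)=\imath^{n}n!\sum_{k}a_{n,k}Q_{n,k}(x)$ from Theorem \ref{formula-pn}, invokes the reflection identity $Q_{n,k}^{*}(x)=(-1)^{n}Q_{n,n-k}(x)$ of Lemma \ref{q-conj}, and finishes with the linear independence of the $Q_{n,k}$ (Theorem \ref{q-li}). You work instead inside the algebra, with the conjugate-linear anti-involution $\dagger$ fixing $p$ and $q$: hermitian-symmetry of the pyramid is precisely $A_{n}^{\dagger}=A_{n}$, reality of the $b_{n,r}$ is precisely $P_{n}(z)^{\dagger}=P_{n}(z)$ since $z^{\dagger}=z$, and the identity $A_{n}=P_{n}(z)$ chains the two. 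The arguments are parallel at heart --- your relation $(q^{k}p^{n}q^{n-k})^{\dagger}=q^{n-k}p^{n}q^{k}$ is exactly the operator-level statement whose image under $\mathcal{O}$ is Lemma \ref{q-conj} --- and both hinge on the same two independence facts. On that point, your appeal to ``bijectivity'' of the correspondence in Theorems \ref{formula-an}--\ref{formula-pn} for the independence of the words is sound (Theorem \ref{formula-an} recovers $\{a_{n,k}\}$ from $P_{n}$ by inverting the system \eqref{formula-1}), but it is cleaner to cite Corollary \ref{thm-basis}, which states exactly what you need. What your route buys: it bypasses the explicit formulas for $P_{n}$ and the $Q_{n,k}$ altogether, and it explains the theorem structurally --- hermitian-symmetric pyramids are exactly the self-adjoint elements, and a polynomial in the self-adjoint element $z$ is self-adjoint iff its coefficients are real --- which also dovetails with the Hermitian/palindrome question posed at the end of Section \ref{sec-two}. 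What it costs: you must verify that $\dagger$ descends to the quotient (your check that $qp-pq-\imath$ is sent to its own negative is the right one, since the two-sided ideal it generates is then $\dagger$-stable) and you must separately justify the independence of $1,z,\dots,z^{n}$ (your order-of-differential-operator argument, or evaluation on the monomials $x^{m}$, does this), both of which the paper sidesteps by conjugating honest polynomials pointwise on $\mathbb{R}$.
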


Parity of the polynomials $P_{n}(z)$ appears from symmetries in the 
pyramid $\{ a_{n,k} \}$.
 
\begin{proposition}
\label{parity-consistent1}
Assume the coefficients $\{ a_{n,k} \}$ are real and symmetric. Then
$P_{n}(z)$ has the same parity as $n$.
\end{proposition}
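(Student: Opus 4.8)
My plan is to exhibit a single symmetry of the Weyl algebra that simultaneously reverses the sign of $z$ and reproduces the word $W_n := \sum_{k=0}^{n} a_{n,k}\, q^{k}p^{n}q^{n-k}$ up to the sign $(-1)^n$. Concretely, I would introduce the $\mathbb{C}$-linear anti-automorphism $\mu$ of the algebra generated by $p,q$ determined by $\mu(p)=p$, $\mu(q)=-q$, and $\mu(AB)=\mu(B)\mu(A)$. The first step is to check that $\mu$ respects the relation \eqref{noncomm}: because $\mu$ reverses products, $\mu(qp-pq)=\mu(p)\mu(q)-\mu(q)\mu(p)=-pq+qp=qp-pq=\imath$, so $\mu$ descends to the quotient (and is in fact an involution, since $\mu^2$ fixes $p$ and $q$). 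The point of using an anti-automorphism rather than an ordinary substitution is that the naive map $q\mapsto -q$ alone would violate \eqref{noncomm}; reversing the order of the factors is exactly what restores it.

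Next I would record the two effects of $\mu$. On the generator $z=\tfrac{1}{2}(qp+pq)$ one computes $\mu(z)=\tfrac{1}{2}(-pq-qp)=-z$; consequently, since $\mu$ is $\mathbb{C}$-linear and $\mu(z^{k})=\mu(z)^{k}=(-z)^{k}$, it acts on polynomials by $\mu(P_n(z))=P_n(-z)$. On a single balanced word $\mu$ reverses the letters and contributes one factor $-1$ for each of the $n$ copies of $q$, so $\mu(q^{k}p^{n}q^{n-k})=(-1)^{n}q^{n-k}p^{n}q^{k}$. Summing and reindexing $k\mapsto n-k$ gives $\mu(W_n)=(-1)^{n}\sum_{k} a_{n,n-k}\, q^{k}p^{n}q^{n-k}$, which equals $(-1)^{n}W_n$ upon invoking the symmetry hypothesis $a_{n,k}=a_{n,n-k}$.

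Finally I would combine the two computations. Since Theorem \ref{thm-balanced} identifies $W_n$ with $P_n(z)$, applying $\mu$ yields
\begin{equation}
P_n(-z)=\mu\bigl(P_n(z)\bigr)=\mu(W_n)=(-1)^{n}W_n=(-1)^{n}P_n(z),
\end{equation}
which is precisely the statement that $P_n$ has the same parity as $n$. I would observe that only the symmetry of $\{a_{n,k}\}$ is actually needed for this identity; reality enters only through Theorem \ref{poly-real1}, which guarantees that the coefficients $b_{n,r}$ are real, so that ``parity'' refers to an honest real polynomial.

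The step I expect to be the main obstacle is the bookkeeping in the middle paragraph: one must verify that the anti-automorphism is the correct device and then track both the sign $(-1)^{n}$ and the index reversal $k\mapsto n-k$ so that the symmetry of the pyramid is applied to the right summand. As a completely independent cross-check, the explicit formula \eqref{formula-1a} gives a purely computational alternative. With $\{a_{n,k}\}$ real and symmetric the inner double sum in \eqref{formula-1a} is real (the entries $a_{n,\ell}$, the Stirling numbers, the binomials, and the powers $(\ell-\tfrac{1}{2})^{j}$ are all real), while the prefactor $\imath^{n+k}$ is purely imaginary whenever $n+k$ is odd. Since $b_{n,k}$ is real by Theorem \ref{poly-real1}, each such coefficient must vanish, so only powers $z^{k}$ with $k\equiv n\pmod 2$ survive, recovering the same conclusion.
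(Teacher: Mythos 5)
Your proof is correct, but it takes a genuinely different route from the paper's. The paper's proof is a two-line computation built on the expansion of Theorem \ref{formula-pn}: writing $P_{n}(z) = \imath^{n} n! \sum_{k} a_{n,k} Q_{n,k}(z)$ with $Q_{n,k}$ as in \eqref{formula-qn}, it invokes the reflection identity $Q_{n,k}(-x) = (-1)^{n} Q_{n,n-k}(x)$ (proved by the same manipulation of the defining product as Lemma \ref{q-conj}); the symmetry $a_{n,k} = a_{n,n-k}$ then reindexes the sum back to $(-1)^{n}P_{n}(x)$. You instead locate the parity in a symmetry of the Heisenberg--Weyl algebra itself: the $\mathbb{C}$-linear anti-automorphism $\mu$ with $\mu(p)=p$, $\mu(q)=-q$, which preserves \eqref{noncomm} (your observation that order reversal is exactly what rescues the relation is the right one), negates $z$, and multiplies the symmetrized word by $(-1)^{n}$. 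All of your verifications are sound, including the descent of $\mu$ to the quotient and the sign and index bookkeeping. The one step worth making explicit is the last: the identity $P_{n}(-z) = (-1)^{n}P_{n}(z)$ you derive is an identity in $\mathcal{A}$, and passing from it to a statement about the coefficients $b_{n,r}$ requires that $1, z, z^{2}, \dots$ be linearly independent in $\mathcal{A}$. This is implicit throughout the paper (it is what makes the map $\mathcal{O}$ into $\mathbb{C}[x]$ well defined) and follows from the realization $p = x$, $q = \imath \tfrac{d}{dx}$, under which $z$ acts on $x^{m}$ by the distinct scalars $\imath\left(m+\tfrac{1}{2}\right)$, so it is not a gap, but your write-up should cite it. What your approach buys is conceptual clarity: parity appears as a consequence of an algebra symmetry, with no explicit formula for $P_{n}$ needed, and in particular it does not rely on Theorem \ref{formula-pn} at all; the paper's proof is shorter only because that machinery is already in place. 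Your cross-check via \eqref{formula-1a} together with Theorem \ref{poly-real1} is also valid (for $n+k$ odd the prefactor $(-1)^{k}\imath^{n+k}$ is purely imaginary while the double sum is real, forcing the real coefficient $b_{n,k}$ to vanish), and it is in spirit closer to the paper's coefficient-level argument.
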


The next question remains open. 

\begin{problem}
Determine conditions on a real symmetric pyramid $\{ a_{n,k} \}$ in 
order to obtain 
polynomials $\{ P_{n}(z) \}$ which are orthogonal with 
respect to a positive weight function $w(x)$. 
\end{problem}

It is simple to produce algebraic equations for the first 
few coefficients of a 
family of polynomials 
$\begin{displaystyle} C_{n}(z) = \sum_{k=0}^{n} c_{n,k}z^{k} \end{displaystyle}$
in order to be orthogonal. 

\begin{lemma}
\label{ortho-1a}
Assume $ \left\{ C_{n}(z) \right\}$
is a family of monic orthogonal polynomials, with $C_{n}$ of the same 
parity as $n$. Then 
$c_{4,0} + c_{2,0}c_{3,1} - c_{2,0}c_{4,2} = 0$ and 
\begin{equation*}
c_{2,0}c_{5,1} + c_{4,0}c_{5,3} - c_{2,0}c_{4,2}c_{5,3} + c_{6,0} 
-c_{2,0}c_{6,2} - c_{4,0}c_{6,4} + c_{2,0}c_{4,2}c_{6,4} = 0.
\end{equation*}
\end{lemma}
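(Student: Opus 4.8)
The plan is to reduce everything to the three-term recurrence satisfied by any monic orthogonal polynomial sequence. By the three-term recurrence theorem, the family $\{C_n(z)\}$ obeys
$$C_{n+1}(z) = (z - \alpha_n) C_n(z) - \beta_n C_{n-1}(z), \qquad C_{-1}(z) = 0, \; C_0(z) = 1,$$
for suitable scalars $\alpha_n,\beta_n$. First I would exploit the parity hypothesis: since $C_n(-z) = (-1)^n C_n(z)$, replacing $z$ by $-z$ in the recurrence and comparing with the original forces $\alpha_n = -\alpha_n$, hence $\alpha_n = 0$ for every $n$. Thus the recurrence collapses to the symmetric form
$$C_{n+1}(z) = z\,C_n(z) - \beta_n C_{n-1}(z).$$
This is the one structural input the proof needs; the rest is elimination of the parameters $\beta_1,\dots,\beta_5$.

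Next I would iterate this recurrence from $C_0 = 1$, $C_1 = z$ to read off the coefficients appearing in the statement as polynomials in the $\beta_j$. A short computation gives
$$c_{2,0} = -\beta_1, \quad c_{3,1} = -(\beta_1+\beta_2), \quad c_{4,0} = \beta_1\beta_3, \quad c_{4,2} = -(\beta_1+\beta_2+\beta_3),$$
and similarly $c_{5,1},c_{5,3},c_{6,0},c_{6,2},c_{6,4}$ come out as the corresponding signed sums of products $\beta_{i_1}\cdots\beta_{i_k}$ taken over index sets whose consecutive gaps are at least two (for instance $c_{6,0} = -\beta_1\beta_3\beta_5$). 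With these in hand both assertions become polynomial identities in the $\beta_j$, which I would verify by direct substitution.

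For the first identity the cancellation is transparent: since $c_{3,1}-c_{4,2} = \beta_3$ and $c_{2,0} = -\beta_1$, one finds $c_{4,0} + c_{2,0}(c_{3,1}-c_{4,2}) = \beta_1\beta_3 - \beta_1\beta_3 = 0$. The second identity is the same phenomenon one level higher: after inserting the explicit $\beta$-expressions, every monomial in $\beta_1,\dots,\beta_5$ produced by the positive terms is matched by an identical monomial from the negative terms, so the sum telescopes to zero.

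The main obstacle is purely the bookkeeping in the second identity, where the triple products $c_{2,0}c_{4,2}c_{5,3}$ and $c_{2,0}c_{4,2}c_{6,4}$ expand into sums indexed by pairs and triples of the $\beta_j$; organizing these so that the cancellation is manifest—either by grouping according to the largest index present, or by re-expressing each $c_{n,n-2k}$ through the gap-$\geq 2$ combinatorial formula and matching monomials—is the only nontrivial labor. No deeper idea is required: the two displayed relations are exactly the algebraic constraints forced on the low-degree coefficients by the fact that the whole triangular array $\{c_{n,k}\}$ is generated by the single sequence $\{\beta_n\}$.
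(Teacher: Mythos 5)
Your proof is correct, but it takes a genuinely different route from the paper's. The paper works directly from the definition of orthogonality: it writes the conditions $\langle C_0, C_2\rangle = \langle C_0, C_4\rangle = \langle C_1, C_3\rangle = 0$ (and, for the second relation, the analogous conditions among $C_0,\dots,C_6$) as linear equations in the even moments $\left[ z^{2a} \right] = \int_{\mathbb{R}} z^{2a} w(z)\,dz$, and then obtains the two displayed relations as compatibility (vanishing-determinant) conditions for these overdetermined linear systems in the moments. You instead invoke the three-term recurrence $C_{n+1}(z) = (z-\alpha_n)C_n(z) - \beta_n C_{n-1}(z)$, use the parity hypothesis to force $\alpha_n = 0$, and express every coefficient $c_{n,k}$ as a polynomial in $\beta_1,\dots,\beta_5$, reducing both relations to polynomial identities. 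Your computed values are right ($c_{2,0}=-\beta_1$, $c_{3,1}=-(\beta_1+\beta_2)$, $c_{4,0}=\beta_1\beta_3$, $c_{4,2}=-(\beta_1+\beta_2+\beta_3)$, together with the gap-$\geq 2$ description of the degree-five and degree-six coefficients), and the second identity does telescope as you claim: writing $c_{6,0}=-\beta_5 c_{4,0}$, $c_{6,2}=c_{5,1}-\beta_5 c_{4,2}$, and $c_{6,4}=c_{5,3}-\beta_5$, then substituting into the seven-term expression, makes every term cancel in pairs, which settles the bookkeeping you deferred. The trade-off between the two arguments: the paper's moment elimination needs only the orthogonality of a few specific pairs and generalizes mechanically to produce further necessary conditions (one per additional determinant), while your argument uses the heavier (but standard) fact that a monic orthogonal family satisfies a three-term recurrence, in exchange for which the identities become transparent algebraic consequences of the fact that the whole array $\{c_{n,k}\}$ is generated by the single free sequence $\{\beta_n\}$.
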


As shown in \cite{bender-1988a}, the 
first condition may be used to prove that certain classical pyramids, such 
as the \textit{symmetric ordering} $a_{n,k} = \delta_{n,k}$ and the 
\textit{Born-Jordan ordering} $a_{n,k} = 1$ do not produce orthogonal 
polynomials. On the other hand, the \textit{Weyl-ordering} $a_{n,k} = 
\binom{n}{k}$ and the case $a_{n,k} = \binom{n}{k}^{2}$ satisfy the 
conditions of Lemma \ref{ortho-1a}.  The polynomials coming from the 
Weyl-ordering may be expressed in terms of the 
\textit{continuous Hahn polynomials} (see Example \ref{weyl-ortho}) and 
those obtained from $a_{n,k} = \binom{n}{k}^{2}$
can be expressed in terms of the 
\textit{Bateman polynomials} (see Example \ref{bateman}).  It 
is curious that these seem to be the only powers of binomial coefficients
that give orthogonal polynomials. 
Experimental evidence on the basis 
of the first condition of Lemma \ref{ortho-1a} rules out the first $50000$ 
power functions $\binom{n}{k}^{r}$.

Partial results on the pyramids $\{ a_{n,k} \}$ 
associated to the Legendre and Hermite polynomials, as examples of 
orthogonal families, and also for the sequence 
$P_{n}(x) = x^{n}$, yield pyramids with combinatorial 
flavor. The complete 
characterization of these pyramids, as well as those corresponding to 
other classical orthogonal polynomials, remains
an open question.

\section{Balanced words in $p$ and $q$ are polynomials in $z$}
\label{sec-balanced}

Let $\mathcal{X} = \{ p, \, q \}$ be an alphabet. A \textit{word} over 
$\mathcal{X}$ is an expression of the form 
\begin{equation}
w = w_{1}w_{2} \cdots w_{k},
\label{gen-w}
\end{equation}
\noindent
with $w_{j} \in \mathcal{X}$. The set of all words is denoted by 
$W(\mathcal{X})$. The 
multiplication of words is defined by concatenation. Every word $w$ over 
$\mathcal{X}$, with $w_{1} = p$, has a unique representation in the form 
\begin{equation}
w = p^{n_{1}} q^{m_{1}} p^{n_{2}}q^{m_{2}}\cdots p^{n_{j}}q^{m_{j}}
\end{equation}
\noindent
with $n_{i}, \, m_{i} \in \mathbb{N}$ with the possibility that $m_{j} = 0$
if the last letter in $w$ is $p$.  A similar unique representation exists
if $w_{1} = q$. 

\begin{definition}
A word $w \in \mathcal{X}$ is called \textit{balanced} if
it has the same number of $p$'s and $q$'s, that is, 
$n_{1}+n_{2}+ \cdots + n_{j} = m_{1}+m_{2}+ \cdots + m_{j}$.
\end{definition}

\begin{definition}
The \textit{free algebra} $F_{\mathbb{C}}(\mathcal{X})$ is the set 
\begin{equation}
F_{\mathbb{C}}(\mathcal{X}) = \left\{ \sum_{j=1}^{m} \alpha_{j}w^{(j)}: 
\, m \in \mathbb{N}, \, \alpha_{j} \in \mathbb{C} \text{ and }
w^{(j)} \text{ is a word over } \mathcal{X} \right\}
\end{equation}
\end{definition}

The results presented here are related to the algebra $\mathcal{A}$ obtained 
from $F_{\mathbb{C}}(\mathcal{X})$ after  the identification $qp-pq = \imath$. 
The fact that this is a non-homogenous element leads to difficulties in 
defining a degree. For instance, the elements $qp^{2}q,\, 2 \imath pq + 
p^{2}q^{2},$ and $\tfrac{1}{4}(qp+pq)^{2} + \tfrac{1}{4}$ all represent 
the same element in $\mathcal{A}$.

\begin{definition}
The \textit{Heisenberg-Weyl algebra} $\mathcal{A}$ is the quotient algebra
\begin{equation}
\mathcal{A} = F_{\mathbb{C}}(\mathcal{X})/ \left\{ qp-pq = \imath \right\}.
\end{equation}
\noindent
Every element $w \in \mathcal{A}$ has a representation in the form 
$\begin{displaystyle} w = \sum_{j=1}^{m} \alpha_{j}w^{(j)}
\end{displaystyle},$ where  
$m \in \mathbb{N}, \, \alpha_{j} \in \mathbb{C}, \,
w^{(j)}$ is a word in $\{p, \, q \}$  and  $qp-pq = \imath$. In 
general, there are many such representations.
\end{definition}

\begin{definition}
Define the special subsets of $\mathcal{A}$ by 
\begin{equation*}
\mathcal{B}[p,q] = \left\{ \sum_{k=0}^{n} \alpha_{n,k}w^{(j)}: \,
\text{ for some } n \in \mathbb{N}, \, \alpha_{n,k} \in \mathbb{C} 
\text{ and } w^{(j)} \text{ balanced word in }p, \,  q
\right\}
\end{equation*}
\noindent
and 
\begin{equation*}
\mathcal{H}[p,q] = \left\{ \sum_{k=0}^{n} a_{n,k}q^{k}p^{n}q^{n-k}: \,
\text{ for some } n \in \mathbb{N},  a_{n,k} \in \mathbb{R} \text { and }
a_{n,k} = a_{n,n-k} \right\}.
\end{equation*}
\noindent
Observe that $\mathcal{H}[p,q] \subset \mathcal{B}[p,q]$, elements in the 
latter are hermitian. 
\end{definition}

The main result of this section is stated next.

\begin{theorem}
\label{thm-balanced}
Every element of 
$\mathcal{B}[p,q]$ is a polynomial in $z = \tfrac{1}{2}(qp+pq)$.
\end{theorem}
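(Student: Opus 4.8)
The plan is to reduce everything, by linearity, to a single balanced word: a finite linear combination of polynomials in $z$ is again a polynomial in $z$, so it suffices to show that an arbitrary balanced word $w$, say with $n$ occurrences of $p$ and $n$ of $q$, reduces in $\mathcal{A}$ to a polynomial in $z$. The mechanism I would use is \emph{normal ordering}: applying the relation $pq = qp - \imath$ repeatedly rewrites any element of $\mathcal{A}$ as a linear combination of the standard monomials $q^{a}p^{b}$ (all creation operators to the left). The crucial observation is that this process preserves balance. Indeed, each elementary replacement $pq \mapsto qp - \imath$ leaves the monomial $qp$ with the same number of $p$'s and $q$'s, while the constant term $-\imath$ removes exactly one $p$ and one $q$ simultaneously; hence the equality $\#p = \#q$ is an invariant of the reduction. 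Starting from a balanced $w$, the normal form can therefore only contain diagonal monomials, and so
\begin{equation*}
w = \sum_{k=0}^{n} d_{k}\, q^{k} p^{k} \quad \text{in } \mathcal{A},
\end{equation*}
for suitable constants $d_{k}$.

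It then remains to show that each $q^{k}p^{k}$ is itself a polynomial in $z$. For this I would record the two commutation identities $[z,p] = \imath p$ and $[z,q] = -\imath q$, which follow directly from $z = \tfrac12(qp+pq)$ and $qp - pq = \imath$. They are equivalent to the ladder rules $(z+c)p = p(z+c+\imath)$ and $(z+c)q = q(z+c-\imath)$, valid for every scalar $c$, which let $z$ be slid past strings of $p$'s and $q$'s at the cost of a shift. Combining these with the basic identity $qp = z + \tfrac{\imath}{2}$ and writing $A_{k} = q^{k}p^{k}$, one expresses $A_{k} = q^{k-1}(qp)p^{k-1}$, replaces the central factor $qp$ by $z + \tfrac{\imath}{2}$, and slides $z$ rightward past the remaining $p^{k-1}$ to obtain the recursion
\begin{equation*}
A_{k} = A_{k-1}\left( z + \left(k - \tfrac12\right)\imath \right), \qquad A_{0} = 1.
\end{equation*}
A one-line induction then gives $q^{k}p^{k} = \prod_{j=1}^{k}\left( z + \left(j - \tfrac12\right)\imath \right)$, a polynomial in $z$ of degree $k$; substituting into the displayed expression for $w$ completes the argument.

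The step I expect to be the main obstacle is the balance-preservation claim in the first paragraph --- the assertion that reducing a balanced word to normal form yields only the diagonal monomials $q^{k}p^{k}$, with no terms $q^{a}p^{b}$ having $a \neq b$. Everything downstream is short and explicit, but this structural fact is what forces the answer into the subalgebra generated by $z$. Making it airtight requires knowing that the normal form is well defined (a Poincar\'e--Birkhoff--Witt-type basis $\{q^{a}p^{b}\}$ for $\mathcal{A}$), so that $\#p = \#q$ is a genuine property of the reduced element rather than an artifact of a particular order of reductions.
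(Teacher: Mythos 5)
Your proof is correct, but it follows a genuinely different route from the paper's. The paper argues by direct structural induction on the word: it peels a factor $pq$ off the front using recurrences of the form $p^{c}q = qp^{c} - (\text{const})\,p^{c-1}$, so that a balanced word equals $pq\,(\text{shorter balanced word}) - (\text{const})\,(\text{shorter balanced word})$, and then closes the induction with the base cases $qp = z + \imath/2$, $pq = z - \imath/2$; no normal ordering and no closed formula ever appear. You instead (i) normally order, observing that each rewrite $u\,pq\,v \mapsto u\,qp\,v - \imath\,uv$ preserves balance term by term, which forces a balanced word into the diagonal span $\sum_{k} d_{k}\,q^{k}p^{k}$, and (ii) evaluate the diagonal words in closed form, $q^{k}p^{k} = \prod_{j=1}^{k}\bigl(z + (j - \tfrac{1}{2})\imath\bigr)$, via the ladder identities $[z,p] = \imath p$ and $[z,q] = -\imath q$. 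Your route buys more than the paper's: the balance-preservation invariant is a clean structural fact, and your product formula is precisely the $k=n$ specialization of the paper's later formula \eqref{QmapO} for $Q_{n,k}$, mirroring the identity $D^{k}R^{k} = \prod_{j=1}^{k}(DR - j + 1)$ in \eqref{two-facts}; in effect you anticipate machinery the paper only develops in Section \ref{sec-two}. One remark: the obstacle you flag at the end is not actually an obstacle. You do not need uniqueness of the normal form (a PBW basis), because each rewrite is an identity in $\mathcal{A}$ and preserves balance term by term; hence any single terminating reduction sequence already exhibits $w$ as a linear combination of diagonal monomials $q^{k}p^{k}$, which is all the argument requires. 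Termination itself is routine (each step either shortens a word or strictly decreases its number of $p\cdots q$ inversions), so your proof closes without invoking PBW at all.
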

\begin{proof}
It suffices to prove the result for a balanced word 
\begin{equation}
w = p^{a_{1}}q^{b_{1}}p^{a_{2}}q^{a_{2}} \cdots p^{a_{t}}q^{b_{t}},
\end{equation}
\noindent
where $a_{j}, \, b_{j} \geq 1$ for $2 \leq j \leq t-1$ and
$a_{1}, \, b_{t} \geq 0$. An easy induction argument proves the 
recurrences
\begin{equation}
p^{c}q = qp^{c} - \left[ (c-1)\imath + 1 \right]p^{c-1} \text{ and }
q^{c}p = pq^{c} + \left[ (c-1)\imath + 1 \right]q^{c-1}.
\end{equation}
\noindent
Without loss of generality, assume that $a_{1} \geq 1$. Then 
\begin{eqnarray*}
w & = & p \left( p^{a_{1}-1}q \right) q^{b_{1}-1}p^{a_{2}}q^{b_{2}} 
\cdots p^{a_{t}}q^{b_{t}} \\
& = & p \left[ qp^{a_{1}-1} - \left( (a_{1}-2) \imath +1 \right)
p^{a_{1}-2} \right] q^{b_{1}-1} p^{a_{2}} \cdots q^{b_{t}} \\
& = & pq \left[ 
p^{a_{1}-1} q^{b_{1}-1} p^{a_{2}} \cdots q^{b_{t}} \right] 
 -  \left( (a_{1}-2) \imath +1 \right)
p^{a_{1}-1} q^{b_{1}-1} p^{a_{2}} \cdots q^{b_{t}}.
\end{eqnarray*}
\noindent
The result now follows by induction on the number of $p$'s and $q$'s in 
the word starting with $qp = z + \imath/2$ and $pq = z - \imath/2$. 
\end{proof}

Theorem \ref{poly-real} and Proposition \ref{parity-consistent} show that 
the polynomials associated to elements of $\mathcal{H}[p,q]$ have real 
coefficients and have the same parity as $n$. 

\smallskip

The next corollary is called a \textit{useful identity} in \cite{bender-1988a}.

\begin{corollary}
The identity $q^{k}p^{n}q^{n-k} = p^{n-k}q^{n}p^{k}$ holds.
\end{corollary}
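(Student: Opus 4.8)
The plan is to exhibit a single symmetry of the algebra $\mathcal{A}$ that interchanges the two words while fixing $z$, and then to invoke Theorem~\ref{thm-balanced}. Both $q^{k}p^{n}q^{n-k}$ and $p^{n-k}q^{n}p^{k}$ are balanced, so each is a polynomial in $z$; writing $q^{k}p^{n}q^{n-k} = P(z)$, it suffices to produce a $\mathbb{C}$-linear anti-automorphism $\phi$ of $\mathcal{A}$ with $\phi(z) = z$ that sends $q^{k}p^{n}q^{n-k}$ to $p^{n-k}q^{n}p^{k}$. For such a $\phi$ one has $\phi(P(z)) = P(\phi(z)) = P(z)$, since an anti-automorphism satisfies $\phi(z^{m}) = \phi(z)^{m}$ on powers of a single element and is linear over scalars; hence $p^{n-k}q^{n}p^{k} = \phi(q^{k}p^{n}q^{n-k}) = P(z) = q^{k}p^{n}q^{n-k}$, which is the claim.

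First I would define $\phi$ on the free algebra $F_{\mathbb{C}}(\mathcal{X})$ as the $\mathbb{C}$-linear anti-homomorphism determined by $\phi(p) = q$ and $\phi(q) = p$; concretely, on a word it reverses the order of the letters and then interchanges $p \leftrightarrow q$. Next I would check that $\phi$ descends to the quotient $\mathcal{A}$. Since $\phi$ reverses products, $\phi(qp) = \phi(p)\phi(q) = qp$ and $\phi(pq) = \phi(q)\phi(p) = pq$, so the commutator is preserved: $\phi(qp - pq) = qp - pq = \imath = \phi(\imath)$. Thus $\phi$ carries the two-sided ideal generated by $qp - pq - \imath$ into itself and is a well-defined anti-automorphism of $\mathcal{A}$. (Note that the \emph{automorphism} swapping $p$ and $q$ would send the relation to $-\imath$ and so fail to respect it; the reversal is what makes the $\mathbb{C}$-linear map legitimate.)

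It then remains to run the two short computations. For the central element, the same identities give $\phi(z) = \tfrac{1}{2}\bigl(\phi(qp) + \phi(pq)\bigr) = \tfrac{1}{2}(qp + pq) = z$. For the word, $\phi$ reverses the three blocks and swaps each letter, $\phi(q^{k}p^{n}q^{n-k}) = \phi(q^{n-k})\,\phi(p^{n})\,\phi(q^{k}) = p^{n-k}q^{n}p^{k}$, exactly as required; combined with the first paragraph this closes the argument.

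I expect the only genuine subtlety to be the well-definedness of $\phi$ on $\mathcal{A}$, that is, confirming that it respects the noncommutation relation rather than merely permuting symbols in the free algebra; once $\phi(z) = z$ is established the conclusion is immediate from Theorem~\ref{thm-balanced}. An alternative, more pedestrian route would avoid $\phi$ altogether and proceed by induction on $n$, using the recurrences $p^{c}q = qp^{c} - [(c-1)\imath + 1]p^{c-1}$ and $q^{c}p = pq^{c} + [(c-1)\imath + 1]q^{c-1}$ from the proof of Theorem~\ref{thm-balanced} to migrate the outer factors across; this works but requires careful index bookkeeping, whereas the anti-automorphism argument is cleaner and conceptually transparent.
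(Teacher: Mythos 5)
Your proof is correct, but it takes a genuinely different route from the paper. The paper's proof is a two-line factorization argument: write
\begin{equation*}
q^{k}p^{n}q^{n-k} = \left( q^{k} p^{k} \right) \left( p^{n-k} q^{n-k} \right),
\end{equation*}
note that both factors are balanced, hence each is a polynomial in $z$ by Theorem~\ref{thm-balanced}, hence they commute (any two polynomials in the same element commute); swapping them gives $p^{n-k}q^{n-k}q^{k}p^{k} = p^{n-k}q^{n}p^{k}$. You instead build the reversal anti-automorphism $\phi$ (reverse the word, interchange $p \leftrightarrow q$), check that it descends to $\mathcal{A}$ because $\phi$ fixes $qp-pq-\imath$, observe $\phi(z)=z$, and conclude that the word, being a polynomial in $z$, is fixed by $\phi$. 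Both arguments rest on Theorem~\ref{thm-balanced}, but they use it differently: the paper uses it to commute two balanced factors, while you use it to write the single word as $P(z)$ and then exploit a symmetry fixing $z$. The paper's proof is shorter and needs nothing beyond the theorem's statement; yours requires more setup --- well-definedness on the quotient is the one point that genuinely needs checking, and you check it correctly, including the right observation that the naive swap automorphism fails while the anti-automorphism succeeds --- but it proves strictly more: every balanced word $w$ satisfies $w = \phi(w)$, i.e.\ equals its own reversal with $p$ and $q$ interchanged, of which the corollary is the special case $w = q^{k}p^{n}q^{n-k}$. Your $\phi$ is also a natural companion to the adjoint map $w \mapsto w^{*}$ that the paper introduces later in its palindrome question, so the construction has some independent value beyond this corollary.
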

\begin{proof}
Simply write 
\begin{equation}
q^{k}p^{n}q^{n-k} = \left( q^{k} p^{k} \right) \left( p^{n-k} q^{n-k} 
\right).
\end{equation}
\noindent
The words $q^{k}p^{k}$ and $p^{n-k}q^{n-k}$ are balanced, so they commute. This
gives the result. 
\end{proof}

\section{An expression for a polynomial in two different bases}
\label{sec-two}

Let $n \in \mathbb{N}$ and $a_{n,k} \in \mathbb{C}$. Theorem 
\ref{thm-balanced} gives a polynomial 
map $\mathcal{O}: \mathcal{H}[p,q] \to \mathbb{C}[x]$:
\begin{equation}
\sum_{k=0}^{n} a_{n,k} q^{k}p^{n}q^{n-k} \mapsto 
P_{n}(z):= \sum_{r=0}^{n} b_{n,r}z^{r}.
\label{relation-1}
\end{equation}
\noindent
Explicit formulas connecting 
 $\{a_{n,k} \}$ and $\{ b_{n,k} \}$ are given in this section.

\begin{theorem}
\label{formula-an}
The sequence $\{ a_{n,k} \}$ is given by 
\begin{equation}
a_{n,k} = \frac{1}{\imath^{n} n!}
\sum_{j=0}^{n-k} (-1)^{n-k-j} \binom{n+1}{n-k-j}
P_{n} \left( \imath \left( j + \tfrac{1}{2} \right) \right) 
\text{ for } 0 \leq k \leq n.
\label{form-an}
\end{equation}
\noindent
Expanding $P_{n}$ gives \eqref{formula-2a}.
\end{theorem}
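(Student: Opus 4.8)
The plan is to exploit the fact that both sides of the identity $P_n(z) = \sum_{k=0}^n a_{n,k}q^k p^n q^{n-k}$ are concrete operators acting on functions of $x$, and to extract the coefficients $a_{n,k}$ by testing against a well-chosen family of functions. The natural choice is the power $x^s$, because it diagonalizes $z$. Indeed, a direct computation from $pf = xf$ and $qf = \imath f'$ gives $qp\, x^s = \imath(s+1)x^s$ and $pq\, x^s = \imath s\, x^s$, so that $z x^s = \imath\left(s + \tfrac12\right)x^s$; hence $x^s$ is an eigenfunction of $z$ and $P_n(z)x^s = P_n\!\left(\imath(s+\tfrac12)\right)x^s$. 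First I would compute the action of the word directly by peeling off $q^{n-k}$, then $p^n$, then $q^k$, which shows
\begin{equation*}
q^k p^n q^{n-k}x^s = \imath^n (s+k)(s+k-1)\cdots(s+k-n+1)\,x^s = \imath^n n!\binom{s+k}{n}x^s.
\end{equation*}

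Equating the scalar eigenvalue produced on each side then yields the master identity
\begin{equation*}
P_n\!\left(\imath\left(s+\tfrac12\right)\right) = \imath^n n!\sum_{k=0}^n a_{n,k}\binom{s+k}{n},
\end{equation*}
which, holding for all $s$, is a polynomial identity in $s$. Specializing to $s = j$ for $j = 0, 1, \dots, n$ turns the recovery of the $a_{n,k}$ into the inversion of the linear system with matrix $\binom{j+k}{n}$. The theorem's formula asserts that the inverse is given by the alternating sum with coefficients $(-1)^{n-k-j}\binom{n+1}{n-k-j}$, so what remains is to confirm that these coefficients really are the inverse.

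Concretely, substituting the master identity into the claimed formula \eqref{form-an} and interchanging the order of summation, I would need to verify the orthogonality relation
\begin{equation*}
\sum_{j=0}^{n-k}(-1)^{n-k-j}\binom{n+1}{n-k-j}\binom{j+\ell}{n} = \delta_{k,\ell}, \qquad 0 \le k,\ell \le n.
\end{equation*}
After the substitution $m = n-k-j$ and writing $d = \ell - k$, this collapses to the single binomial identity $\sum_{m=0}^d(-1)^m\binom{n+1}{m}\binom{n+d-m}{n} = \delta_{d,0}$, which is immediate from reading off the coefficient of $x^d$ in $(1-x)^{n+1}\cdot(1-x)^{-(n+1)} = 1$, using $\sum_i \binom{n+i}{n}x^i = (1-x)^{-(n+1)}$. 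The cases $\ell < k$ are automatic, since $\binom{j+\ell}{n}$ vanishes throughout the range $0\le j\le n-k$ when $\ell<k$. Finally, the last sentence of the theorem follows by expanding $P_n(z) = \sum_r b_{n,r}z^r$ at $z = \imath(j+\tfrac12)$ inside \eqref{form-an}.

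The bookkeeping of powers of $\imath$ and the falling-factorial collapse are routine; the one genuinely load-bearing step is recognizing that $x^s$ simultaneously diagonalizes $z$ and all the words $q^k p^n q^{n-k}$, which converts an operator-algebra statement into an elementary finite-difference inversion. I expect the main obstacle to be organizing the re-indexing cleanly, so that the inner sum is manifestly the Vandermonde/generating-function identity rather than a more opaque alternating sum.
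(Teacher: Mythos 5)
Your proposal is correct and follows essentially the same route as the paper: realize $p = x$, $q = \imath\, d/dx$, observe that $x^s$ diagonalizes both $z$ and the words $q^k p^n q^{n-k}$, obtain the interpolation identity $P_n\left(\imath\left(j+\tfrac12\right)\right) = \imath^n n! \sum_k a_{n,k}\binom{j+k}{n}$, and invert the resulting linear system. The only cosmetic difference is that you verify the claimed inverse directly from the coefficient identity $(1-x)^{n+1}(1-x)^{-(n+1)} = 1$, whereas the paper isolates this step as a Hankel-matrix inversion lemma proved by WZ --- and the paper's alternative Jordan-block proof of that lemma is precisely your generating-function computation recast in matrix form.
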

\begin{proof}
The realization
$p = x$  and $q = \imath  \frac{d}{dx}$ gives 
$\tfrac{1}{2}(qp + pq) \left(x^{m} \right) = 
\imath  \left( m + \tfrac{1}{2} \right) x^{m}$ \text{ and }
$q^{k}p^{n}q^{n-k}\left(x^{m} \right) 
= \imath^{n} n! \binom{m+k}{m+k-n}x^{m}$,
with the usual convention that $\binom{a}{b} = 0$ if $b<0$. It follows that 
\begin{equation}
\sum_{k=0}^{n} a_{n,k} q^{k}p^{n}q^{n-k} \left(x^{m} \right)
 = \imath^{n}n! \sum_{\ell = 0}^{m} \binom{n+\ell}{\ell}
a_{n,n-m+\ell} x^{m}
\end{equation}
\noindent
and 
\begin{equation}
P_{n} \left( \tfrac{1}{2}(qp+pq)  \right)x^{m} = 
P_{n} \left( \imath \left( m + \tfrac{1}{2} 
\right) \right) x^{m}.
\end{equation}
\noindent
Therefore 
\begin{equation}
P_{n} \left( \imath \left(m + \tfrac{1}{2} \right) \right) = 
\imath^{n} n! \sum_{\ell=0}^{m} \binom{n + \ell }{\ell} 
a_{n,n-m+\ell}, \quad \text{ for } 0 \leq m \leq n.
\label{formula-1}
\end{equation}
Then \eqref{form-an} 
is obtained by solving the linear system \eqref{formula-1}
for $a_{n,m}$, and using the formula for
matrix $M_{n}^{-1}$ given in the next statement.
\end{proof}

\begin{lemma}
The inverse of the Hankel matrix
$\begin{displaystyle} M_{n} = \left[ \binom{i+j}{n} \right]_{0 \leq i,j 
\leq n} \end{displaystyle}$ is 
\newline
$\begin{displaystyle} M_{n}^{-1} = \left[ 
(-1)^{n-i-j} \binom{n+1}{i+j+1} \right]_{0 \leq i,j 
\leq n} \end{displaystyle}$.
\end{lemma}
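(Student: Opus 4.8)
The plan is to verify directly that $M_n M_n^{-1}$ equals the identity; since both matrices are square this also secures the left inverse, and in fact both $M_n$ and the proposed inverse are symmetric (each entry depends only on $i+j$), so the two one-sided checks are transposes of one another. Writing out the $(i,k)$ entry of the product, the identity to be established is
\begin{equation*}
\sum_{j=0}^{n}\binom{i+j}{n}(-1)^{n-j-k}\binom{n+1}{j+k+1} = \delta_{i,k}, \qquad 0 \le i,k \le n.
\end{equation*}

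First I would pin down which terms actually contribute. The factor $\binom{i+j}{n}$ vanishes unless $j \ge n-i$, while $\binom{n+1}{j+k+1}$ vanishes unless $j \le n-k$; since $0 \le i,k \le n$, these are the only constraints, so $j$ runs over $n-i \le j \le n-k$. When $k > i$ this range is empty, the sum is $0$, and this matches $\delta_{i,k}$. In the remaining case $k \le i$, set $d = i-k \ge 0$ and substitute $j = n-i+s$ with $0 \le s \le d$. A short simplification turns the three factors into $\binom{n+s}{s}$, $\binom{n+1}{d-s}$, and the sign $(-1)^{d-s}$, reducing the identity to
\begin{equation*}
\sum_{s=0}^{d}(-1)^{d-s}\binom{n+s}{s}\binom{n+1}{d-s} = \delta_{d,0}.
\end{equation*}

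The last step is to recognize the left-hand side as a Cauchy product. Using $\sum_{s\ge 0}\binom{n+s}{s}x^s = (1-x)^{-(n+1)}$ and $\sum_{t\ge 0}(-1)^t\binom{n+1}{t}x^t = (1-x)^{n+1}$, the sum above is exactly the coefficient of $x^{d}$ in $(1-x)^{-(n+1)}(1-x)^{n+1}=1$, hence equals $\delta_{d,0}=\delta_{i,k}$, completing the proof.

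I expect the only real subtlety to be the bookkeeping of the summation range. A naive generating-function expansion that lets $j$ run over all integers silently introduces spurious terms with $i+j<0$ (evaluated through generalized binomial coefficients), and these precisely cancel the genuine contribution, so one is misled into obtaining $0$ in every case, including $i=k$. Isolating the true range $n-i\le j\le n-k$ \emph{before} substituting is exactly what makes the collapse to the clean convolution legitimate.
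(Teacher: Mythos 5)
Your proof is correct, and it is in fact more fully spelled out than either of the two proofs the paper gives. The paper's primary proof states exactly the entrywise identity you verify, namely $\sum_{k}(-1)^{n-i-j}\binom{i+k}{n}\binom{n+1}{k+j+1}=\delta_{i,j}$, and then disposes of it as ``a routine application of WZ'' --- a computer-algebra certification with no human-readable derivation. The paper's alternative proof is the one genuinely close to yours: it introduces a matrix $\widetilde{M}_n=\sum_{k=0}^n\binom{n+k}{k}J_{n+1}^k$ (where $J_{n+1}$ is the nilpotent Jordan block; $\widetilde{M}_n$ is $M_n$ with its rows reversed, though the paper never says so), uses $\sum_{k\ge 0}\binom{n+k}{k}z^k=(1-z)^{-n-1}$ and nilpotency to write $\widetilde{M}_n=(I-J_{n+1})^{-n-1}$, and then inverts to $(I-J_{n+1})^{n+1}=\sum_{k}(-1)^k\binom{n+1}{k}J_{n+1}^k$. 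Your Cauchy-product step is precisely the identity $(1-x)^{-(n+1)}(1-x)^{n+1}=1$ that powers that argument; the difference is that you apply it coefficientwise after explicitly reducing the Hankel entries to Toeplitz form (the substitution $j=n-i+s$, the case split $k>i$ versus $k\le i$, and the symmetry $\binom{n+1}{n+1-d+s}=\binom{n+1}{d-s}$), whereas the paper performs the computation at the level of matrix functional calculus and leaves the Hankel-to-Toeplitz reversal implicit. So your route buys completeness and elementarity: every support constraint is checked, including the one your closing remark rightly identifies as the danger spot (extending the sum over all $j$ and reading $\binom{i+j}{n}$ as a polynomial in $j$ turns the sum into an $(n+1)$-st finite difference of a degree-$n$ polynomial, which vanishes identically, so the nonzero diagonal genuinely comes from the truncation of the range). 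The paper's matrix formulation buys brevity and makes the structural reason visible --- that $M_n$ is, up to row reversal, the $(n+1)$-st power of a unipotent matrix --- at the cost of omitting exactly the bookkeeping you carried out.
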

\begin{proof}
The claim is equivalent to the identity
\begin{equation}
\sum_{k=0}^{n} (-1)^{n-i-j} \binom{i+k}{n} \binom{n+1}{k+j+1} 
= 
\begin{cases}
0 & \quad \text{ if } i \neq j, \\
1 & \quad \text{ if } i =  j.
\end{cases}
\label{teddy-11}
\end{equation}
\noindent
The proof of \eqref{teddy-11} is a routine application of WZ.  \\
 
\noindent
\textbf{An alternative proof}. Consider the Jordan 
block matrix
$J_{n+1} = J_{n+1}(i,j)$ with zero entries except for $1$ when $j = i+1$. Then 
$J_{n+1}^{k}(i,j)$ is zero except a shifted diagonal with $1's$ at $j=i+k$. In 
particular, $J_{n+1}^{k}=0\,\,\text{for}\,\, k>n$. Thus 
$\tilde{M}_{n}$ can be expressed as
$\begin{displaystyle} 
\tilde{M}_{n}  =  \sum_{k=0}^{n}\binom{n+k}{k}J_{n+1}^{k}
=\sum_{k=0}^{+\infty}\binom{n+k}{k}J_{n+1}^{k}
\end{displaystyle}$. Now use 
$\begin{displaystyle} 
\sum_{k=0}^{+\infty}\binom{n+k}{k}z^{k}=\left(1-z\right)^{-n-1}
\end{displaystyle}$ to conclude that 
$\begin{displaystyle}\tilde{M}_{n}=\left(I_{n+1}-J_{n+1}\right)^{-n-1}
\end{displaystyle}$. Thus 
$$\begin{displaystyle}
\tilde{M}_{n}^{-1}  =  \left(I_{n+1}-J_{n+1}\right)^{n+1}
 =  \sum_{k=0}^{n}\binom{n+1}{k}\left(-1\right)^{k}J_{n+1}^{k}
\end{displaystyle}$$
\noindent
which proves the result.
\end{proof}

The expression for $a_{n,k}$ is particularly simple in the outer diagonal
$\{ a_{n,n} \}$. 

\begin{proposition}
\label{outer-b1}
Let $\{ a_{n,k} \}$ be a pyramid with corresponding polynomials $\{P_{n} \}$.
The outer diagonal of the pyramid is given by
\begin{equation}
a_{n,n} = \frac{1}{\imath^{n} n!} P_{n} \left( \frac{\imath}{2} \right).
\label{outer-an}
\end{equation}
\noindent
Therefore, if the polynomials $P_{n}$ have an exponential
generating function
\begin{equation}
G(z,t)=\sum_{n=0}^{\infty} \frac{P_{n}(z)}{n!}t^{n}
\end{equation}
then the horizontal generating function for the outer diagonal is 
\begin{equation}
\sum_{n=0}^{\infty} a_{n,n}t^{n}=
G\left(\frac{\imath}{2},\frac{t}{\imath}\right).
\end{equation}
\end{proposition}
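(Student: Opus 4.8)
The plan is to obtain both claims as direct consequences of the explicit formula for $a_{n,k}$ recorded in Theorem \ref{formula-an}. First I would specialize that formula to the outermost diagonal by setting $k = n$. The summation index $j$ then runs only over $j = 0$, since its upper limit $n - k$ vanishes, so the sum collapses to the single term
\begin{equation*}
a_{n,n} = \frac{1}{\imath^n n!} (-1)^{0} \binom{n+1}{0} P_n\left(\imath\left(0 + \tfrac{1}{2}\right)\right).
\end{equation*}
Using $\binom{n+1}{0} = 1$, this reduces immediately to $a_{n,n} = \frac{1}{\imath^n n!} P_n\left(\frac{\imath}{2}\right)$, which is \eqref{outer-an}.

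Next I would insert this closed form into the horizontal generating function and rearrange the powers of $\imath$. Writing
\begin{equation*}
\sum_{n=0}^\infty a_{n,n} t^n = \sum_{n=0}^\infty \frac{1}{\imath^n n!} P_n\left(\frac{\imath}{2}\right) t^n = \sum_{n=0}^\infty \frac{P_n\left(\frac{\imath}{2}\right)}{n!}\left(\frac{t}{\imath}\right)^n,
\end{equation*}
I would then match this against the definition $G(z,t) = \sum_{n=0}^\infty \frac{P_n(z)}{n!} t^n$. The last series is precisely $G$ with its first argument set to $z = \frac{\imath}{2}$ and its second argument evaluated at $t/\imath$, giving $\sum_{n=0}^\infty a_{n,n} t^n = G\left(\frac{\imath}{2}, \frac{t}{\imath}\right)$.

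I expect no real obstacle here, since the argument is merely a specialization of an already-proven identity followed by a formal rearrangement of series. The only point meriting care, if one interprets $G$ as a convergent power series rather than a formal one, is to check that the rescaling $t \mapsto t/\imath$ keeps $t$ within the radius of convergence; this holds automatically on a sufficiently small neighborhood of the origin.
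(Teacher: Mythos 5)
Your proof is correct and follows exactly the route the paper intends: the paper states Proposition \ref{outer-b1} without proof precisely because it is the immediate specialization $k=n$ of Theorem \ref{formula-an} (equivalently, $m=0$ in \eqref{formula-1}), which is what you carry out, followed by the same formal rearrangement $t^n/\imath^n = (t/\imath)^n$ for the generating function.
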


The authors of \cite{bender-1988a} state that \textit{apparently, the classical 
orthogonal polynomials give pyramids with ugly entries}. The result of 
Proposition \ref{outer-b1} gives expressions for the outer diagonal 
$\{ a_{n,n} \}$. 

\begin{example}
The Legendre polynomials 
\begin{equation}
P_{n}(x) = \frac{1}{\binom{2n}{n}} 
\sum_{m \geq 0} (-1)^{m} \binom{n}{m} \binom{2n-2m}{n} x^{n-2m},
\end{equation}
\noindent
normalized to be monic, form a sequence of orthogonal
polynomials. The corresponding (non-normalized) pyramid is
\begin{equation}
\begin{array}{ccccccccccccccc}
& &   &  &    &     &    &  1  &    &      &    &   &   &   \nonumber \\
& &   &  &    &     &  1 &  & 1   &   &  & & & \nonumber \\
& &   &  &    &  7   &    & 10   &   &  7
   &    &   &   &   \nonumber \\
& &   &  & 17  &    &  103  & 
  & 103  &    &   17 &   &   &   \nonumber \\
& &   & 203  &   &  2948 &  &  
7138  &  &  2948   &   &  203 & 
  &   \nonumber \\
& & 583 & &  20091  &  &  
 100286 &  &  100286 &  &  
20091 & & 583 &   \nonumber \\
\end{array}
\end{equation}

Proposition \ref{outer-b1} gives
\begin{equation}
a_{n,n} = \frac{n!}{2^{n} (2n)!} 
\sum_{j=0}^{n} 2^{2j} \binom{n}{j} \binom{2n-2j}{n}.
\end{equation}
\end{example}

\begin{example}
The (monic) Hermite polynomials are defined by 
\begin{equation}
H_{n}(x) = \frac{n!}{2^{n}} \sum_{m=0}^{\lfloor{ \tfrac{n}{2} \rfloor}}
 \frac{(-1)^{m}}{m!(n-2m)!} 
(2x)^{n-2m}.
\end{equation}
\noindent
The corresponding pyramid is
\begin{equation}
\begin{array}{ccccccccccccccc}
& &   &  &    &     &    &  1  &    &      &    &   &   &   \nonumber \\
& &   &  &    &     &  1 &    & 1   &      &    &   &   &   \nonumber \\
& &   &  &    & 3   &    & 2   &   &  
3    &    &   &   &   \nonumber \\
& &   &  & 7   &    &  17  &   & 
17  &    &   7 &   &   &   \nonumber \\
& &   & 25  &   &  76  &  &  182
 &  &  76 &   &  25 &   &   \nonumber \\
& & 27  & &  159 &  &  
454 &  & 454  &  &  159
 & & 27  &   \nonumber \\
\end{array}
\end{equation}

The information to state the next Lemma came from OEIS, entry $A047974$. 

\begin{lemma}
The outer diagonal sequence $\{ a_{n,n} \}$ of the pyramid corresponding to
the Hermite polynomials is given by 
$a_{n,n} = h_{n}/2^{n}n!$,
where $\{ h_{n} \}$ satisfies the recurrence 
$h_{n} = h_{n-1} + 2(n-1)h_{n-2}, \quad h_{1} = 1, \, h_{2} = 3$. An
explicit representation of $\{ h_{n} \}$ is given by 
\begin{equation}
h_{n}  =  \frac{1}{2 \sqrt{\pi}} \int_{-\infty}^{\infty} x^{n} 
e^{-(x-1)^{2}/4} \, dx, 
\end{equation}
\noindent
so that $h_{n}$ can be interpreted as the moment of order $n$ of 
a Gaussian random variable $X$, with mean $1$ and variance $2$.
\end{lemma}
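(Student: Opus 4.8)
The plan is to argue entirely through generating functions, leaning on Proposition~\ref{outer-b1}, which already packages the outer diagonal in terms of the exponential generating function of $\{P_n\}$. The first step is to record the exponential generating function of the monic Hermite polynomials. Writing $\hat H_n$ for the physicists' Hermite polynomials, with $\hat H_n(x)=2^n x^n+\cdots$ and the classical generating function $\sum_{n\ge0}\hat H_n(x)t^n/n!=e^{2xt-t^2}$, the displayed formula for $H_n$ is exactly $H_n=2^{-n}\hat H_n$. Rescaling $t\mapsto t/2$ then gives
\begin{equation}
G(z,t)=\sum_{n=0}^{\infty}\frac{H_n(z)}{n!}\,t^n=e^{zt-t^2/4}.
\end{equation}
The only care needed here is to keep the monic normalization straight and not to conflate $H_n$ with $\hat H_n$.

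Next I would feed this into Proposition~\ref{outer-b1}, whose conclusion is that the horizontal generating function of the outer diagonal equals $G(\imath/2,\,t/\imath)$. Since $(\imath/2)(t/\imath)=t/2$ and $(t/\imath)^2=-t^2$, the exponent collapses and
\begin{equation}
\sum_{n=0}^{\infty}a_{n,n}\,t^n=e^{t/2+t^2/4}.
\end{equation}
Substituting the claimed form $a_{n,n}=h_n/(2^n n!)$ and replacing $t$ by $2u$ converts the left side into $\sum_{n\ge0}h_n u^n/n!$ and the right side into $e^{u+u^2}$, so that $f(u):=e^{u+u^2}$ is the exponential generating function of $\{h_n\}$. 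The $\imath$-bookkeeping in applying the proposition is the step most prone to sign slips, so that is where I would proceed most carefully.

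From $f(u)=e^{u+u^2}$ everything follows. Recognizing $f$ as the moment generating function $e^{\mu u+\sigma^2 u^2/2}$ of a Gaussian with $\mu=1$ and $\sigma^2=2$ simultaneously identifies $h_n=\mathbb{E}[X^n]$ and yields the integral representation, since the density of such an $X$ is $\tfrac{1}{2\sqrt\pi}e^{-(x-1)^2/4}$, whence $h_n=\tfrac{1}{2\sqrt\pi}\int_{-\infty}^{\infty}x^n e^{-(x-1)^2/4}\,dx$. For the recurrence I would differentiate: because $f'(u)=(1+2u)f(u)$, comparing the coefficient of $u^{n-1}/(n-1)!$ on the two sides of $f'=f+2uf$ gives $h_n=h_{n-1}+2(n-1)h_{n-2}$, while $f(0)=f'(0)=1$ and $f''(0)=3$ supply the initial data $h_0=h_1=1$ and $h_2=3$.

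A more computational alternative avoids the Hermite generating function altogether: applying Proposition~\ref{outer-b1} directly to the explicit formula for $H_n$ and using $\imath^{-2m}=(-1)^m$ to cancel the signs produces the closed form $h_n=\sum_{m\ge0} n!/(m!\,(n-2m)!)$; one then verifies the integral by the substitution $y=(x-1)/2$ together with the Gaussian moment evaluation $\int_{-\infty}^{\infty}y^{2j}e^{-y^2}\,dy=\Gamma(j+\tfrac12)$, and checks the recurrence term by term. I expect the generating-function route to be shorter and cleaner, and in both versions the only genuine obstacle is the careful tracking of the powers of $\imath$ when invoking Proposition~\ref{outer-b1}; no deeper difficulty arises.
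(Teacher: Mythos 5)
Your proof is correct, but it takes a genuinely different route from the paper's. The paper works pointwise: it invokes Proposition~\ref{outer-b1} in the form $a_{n,n}=\frac{1}{\imath^{n}n!}H_{n}(\imath/2)$ and then feeds the three-term recurrence $2H_{n+1}(x)=2xH_{n}(x)-nH_{n-1}(x)$ evaluated at $x=\imath/2$ into it; tracking the powers of $\imath$ turns this directly into $h_{n+1}=h_{n}+2nh_{n-1}$ for $h_{n}:=2^{n}n!\,a_{n,n}$, and the integral/moment representation is essentially cited (via OEIS A047974) rather than derived. You instead use the generating-function half of Proposition~\ref{outer-b1}: from the monic Hermite exponential generating function $G(z,t)=e^{zt-t^{2}/4}$ you get $\sum_{n}a_{n,n}t^{n}=G(\imath/2,t/\imath)=e^{t/2+t^{2}/4}$, hence $\sum_{n}h_{n}u^{n}/n!=e^{u+u^{2}}$ after $t=2u$, and all your subsequent computations (the $\imath$-bookkeeping, the differential equation $f'=(1+2u)f$ giving the recurrence, the identification of $e^{u+u^{2}}$ as the moment generating function of a Gaussian with mean $1$ and variance $2$, and the closed form $h_{n}=\sum_{m}n!/(m!\,(n-2m)!)$ in your alternative) check out. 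The trade-off: the paper's recurrence argument is shorter if one only wants the recursion, but it leaves the integral representation undischarged; your route derives the recurrence and the Gaussian-moment interpretation simultaneously from a single identity, so it actually proves the full statement of the lemma more completely than the paper's two-line proof does.
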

\begin{proof}
The sequence $\{ a_{n,n} \}$ is related to the (monic) Hermite polynomials by
the statement of Proposition \ref{outer-b1}. The details follow from the 
recurrence for the Hermite polynomials:
$\begin{displaystyle} 2H_{n+1}(x) = 2xH_{n}(x) - nH_{n-1}(x).
\end{displaystyle}$
\end{proof}
\end{example}

\begin{example}
The Chebyshev polynomials $T_{n}$ are given by 
\begin{equation}
T_{n}(x) = \frac{1}{2} \left[ ( x + \sqrt{x^{2}-1} )^{n} + 
( x - \sqrt{x^{2}-1} )^{n} \right].
\end{equation}
\noindent
In this case \eqref{outer-an} gives 
\begin{equation*}
a_{n,n}  =  \frac{1}{\imath^{n} n!} T_{n} \left( \frac{\imath}{2} 
\right) 
 =  \frac{1}{2n!} \left[ \left( \frac{1+ \sqrt{5}}{2} \right)^{n} +
\left( \frac{1- \sqrt{5}}{2} \right)^{n} \right].
\end{equation*}
\noindent
This yields $a_{n,n} = L_{n}/2n!$ where 
$L_{n}$ is the \textit{Lucas number}. The 
corresponding result for the Chebyshev polynomials of the second 
kind $U_{n}(x)$ gives 
$a_{n,n} = F_{n+1}/n!$, with 
$F_{n}$ is the \textit{Fibonacci number}. The proof is based on 
Binet's formula 
\begin{equation}
F_{n} = \frac{ 
\left( 1+ \sqrt{5} \right)^{n} - 
\left( 1- \sqrt{5} \right)^{n} 
}{2^{n} \, \sqrt{5}}.
\end{equation}
\end{example}

Theorem \ref{formula-an} provided an expression for $\{ a_{n,k} \}$ in 
terms of the associated polynomial sequence $\{ P_{n} \}$. The 
next result gives the polynomial $P_{n}(x)$ in terms of the 
sequence $\{ a_{n,k} \}$.

\begin{theorem}
\label{formula-pn}
The polynomials $P_{n}(z)$ associated to the sequence 
$\{ a_{n,k} \}$ are given by 
\begin{equation}
P_{n}(z) = \imath^{n} n! \sum_{k=0}^{n} a_{n,k} 
\binom{- \imath z - \tfrac{1}{2} +k}{n}
= \imath^{n} \sum_{k=0}^{n} a_{n,n-k} 
\left(- \imath z + \tfrac{1}{2} -k \right)_{n},
\label{poly-nice2}
\end{equation}
where $(x)_{n}$ is the shifted factorial defined by 
$(x)_{n} = x(x+1)\cdots (x+n-1)$. Expanding in powers of $z$ 
gives \eqref{formula-1a}.
\end{theorem}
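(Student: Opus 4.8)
The plan is to establish the first equality in \eqref{poly-nice2} by polynomial interpolation, then to pass to the second form by a term-by-term rewriting, and finally to read off \eqref{formula-1a} by expanding with Stirling numbers. First I would set $Q_{n}(z) := \imath^{n} n! \sum_{k=0}^{n} a_{n,k}\binom{-\imath z - \tfrac{1}{2} + k}{n}$, the right-hand side of the first equality. Since $\binom{-\imath z - \tfrac{1}{2} + k}{n}$ is a polynomial in $z$ of degree $n$, both $P_{n}$ and $Q_{n}$ lie in the space of polynomials of degree at most $n$. Hence it suffices to show $P_{n}(z) = Q_{n}(z)$ at the $n+1$ nodes $z_{m} = \imath(m + \tfrac{1}{2})$, $m = 0, 1, \ldots, n$, because Theorem \ref{formula-an} already records the values $P_{n}(z_{m})$ through \eqref{formula-1}.

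Evaluating $Q_{n}$ at $z_{m}$, the substitution gives $-\imath z_{m} - \tfrac{1}{2} + k = (m + \tfrac{1}{2}) - \tfrac{1}{2} + k = m + k$, so that $Q_{n}(z_{m}) = \imath^{n} n! \sum_{k=0}^{n} a_{n,k}\binom{m+k}{n}$. Using $\binom{m+k}{n} = 0$ for $k < n-m$ and reindexing $k = n - m + \ell$ (under which $\binom{m+k}{n} = \binom{n+\ell}{\ell}$) turns this into $\imath^{n} n! \sum_{\ell=0}^{m} \binom{n+\ell}{\ell} a_{n,n-m+\ell}$, which is exactly the right-hand side of \eqref{formula-1}. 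Thus $Q_{n}(z_{m}) = P_{n}(z_{m})$ for $0 \le m \le n$, and the first equality follows from uniqueness of polynomial interpolation at $n+1$ points.

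For the second equality I would relabel $k \mapsto n-k$ in the first sum and verify the pointwise identity $n!\binom{-\imath z - \tfrac{1}{2} + (n-k)}{n} = \left(-\imath z + \tfrac{1}{2} - k\right)_{n}$. Both sides are products of $n$ linear factors in $z$; writing each out shows that the factors of the falling product on the left run over $\{-\imath z - \tfrac{1}{2} - k + i : 1 \le i \le n\}$, exactly matching the factors of the rising factorial on the right. Since $a_{n,k}\mapsto a_{n,n-k}$ under the relabeling, this yields the second form in \eqref{poly-nice2}.

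Finally, to obtain \eqref{formula-1a} I would expand the falling factorial $n!\binom{-\imath z - \tfrac{1}{2} + \ell}{n} = \prod_{i=0}^{n-1}(w - i)$, where $w = -\imath z + (\ell - \tfrac{1}{2})$, using the Stirling numbers of the first kind, $\prod_{i=0}^{n-1}(w-i) = \sum_{t=0}^{n} s(n,t) w^{t}$, and then apply the binomial theorem to $w^{t} = \sum_{k} \binom{t}{k}(-\imath z)^{k} (\ell - \tfrac{1}{2})^{t-k}$. Collecting the coefficient of $z^{k}$ and reindexing $t = j+k$ gives $b_{n,k} = \imath^{n}(-\imath)^{k} \sum_{\ell} a_{n,\ell}\sum_{j=0}^{n-k} s(n,j+k)\binom{j+k}{k}(\ell-\tfrac{1}{2})^{j}$; since $\imath^{n}(-\imath)^{k} = (-1)^{k} \imath^{n+k}$, this is precisely \eqref{formula-1a}. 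I expect the interpolation step to carry the conceptual weight, while the most error-prone part is the final bookkeeping — tracking the powers of $\imath$ and distinguishing the Stirling index from the binomial index in the expansion.
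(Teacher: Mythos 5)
Your proposal is correct and follows essentially the same route as the paper: both establish the identity by observing that the two sides are polynomials of degree at most $n$ agreeing at the $n+1$ nodes $z_m = \imath\left(m+\tfrac{1}{2}\right)$ via formula \eqref{formula-1}, and both obtain \eqref{formula-1a} by a Stirling-number expansion of the factorial polynomials. Your write-up simply makes explicit the reindexing and the bookkeeping of powers of $\imath$ that the paper's terse proof leaves to the reader.
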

\begin{proof}
The result follows directly from formula 
\eqref{formula-1}, that gives  $P_{n}(\imath(m+ \tfrac{1}{2}))$ for 
$0 \leq m \leq n$ (a total of $n+1$ points), remarking that 
\eqref{formula-1} holds in fact for all $m \in \mathbb{R}$ when written in the 
equivalent form 
\begin{equation}
P_{n} \left( \imath \left( m + \tfrac{1}{2} \right) \right) = 
\imath^{n} n! \sum_{k=0}^{n} a_{n,k} \binom{m+k}{n}.
\end{equation}
\noindent
To obtain \eqref{formula-1a}, use 
$\begin{displaystyle}(y)_{k} = \sum_{j=0}^{k} s(k,j) (y+k-1)^{j}
\end{displaystyle}$, the generating function of the Stirling numbers.
\end{proof}

The $1$-dimensional \it Weyl algebra \rm $\mathcal{A}_1$ is the free algebra 
with two generators $R$ and $D$ together with
the commutation relation $RD-DR=1$. This is a parallel version of the 
Heisenberg-Weyl 
algebra $\mathcal{A}$ discussed in Section \ref{sec-balanced}. Each 
$u\in\mathcal{A}_1$ can be 
expressed uniquely in the normal form
$u=\sum_{j,k}\alpha_{j,k}R^jD^k$. 
The connection to lattice paths or Ferrer diagrams is natural and well-known. 
To see this, assume 
$u$ has $n$ and $m$ letters of $R$ and $D$,
respectively. Construct a walk from $(0,m)$ to $(n,0)$ as follows: by 
reading $u$ from left to right, move a unit right
(resp. down) step if the letter is $R$ (resp. $D$). See references 
\cite{blasiak-2008a,graham-1994a,navon-1973a,varvak-2005a} for details. 

The algebra $\mathcal{A}_{1}$ allows alternative proofs for some 
of the results given in this section. 
Among the properties of $\mathcal{A}_{1}$ used, the following ones are easy to 
establish by induction:
\begin{equation}
\label{two-facts}
D^kR^k =\prod_{j=1}^k(DR-j+1), \qquad
R^kD^n =\sum_{j=0}^k\binom{k}j(n-j+1)_jD^{n-j}R^{k-j}.
\end{equation}
If $D= \imath q$ and $R=p$, then 
\begin{equation}
RD-DR=1, \quad 
\mathcal{O}(q^np^n)=\imath^n \mathcal{O}(D^nR^n).
\end{equation}
In this setting, the proof 
of Theorem \ref{formula-pn} begins with 
the introduction of $x:=RD+DR=2DR+1$, so that $DR=\frac{x-1}2$. Note that 
$D = iq$ and $R=p$ yields $x = \imath(pq+qp) = 2 \imath z$. To 
complete the proof, use \eqref{two-facts} to find
\begin{eqnarray*}
\mathcal{O}(q^np^n) &= & \imath^n\sum_{k=0}^{n} a_{n,k}\sum_{j=0}^{k}
\binom{k}j(n-j+1)_jD^{n-j}R^{k-j}R^{n-k}\\
&= & \imath^n\sum_{k=0}^{n}a_{n,k}\sum_{j=0}^{k}\binom{k}j(n-j+1)_jD^{n-j}R^{n-j} \\
&= & \imath^n\sum_{k=0}^{n}a_{n,k}\sum_{j=0}^{k}\binom{k}j(n-j+1)_j
\prod_{m=1}^{k}\left(\tfrac12x-m+\tfrac12\right).
\end{eqnarray*}
The transformation of the last expression to \eqref{poly-nice2}
is automatic with the WZ method.

\begin{definition}
The polynomials appearing in Theorem \ref{formula-pn} are denoted by 
\begin{equation}
Q_{n,k}(z) := \binom{ - \imath z - \tfrac{1}{2} +k}{n} = 
\frac{1}{n!} \prod_{\ell=0}^{n-1} ( - \imath z - \tfrac{1}{2} + k - \ell ).
\label{formula-qn}
\end{equation}
\noindent
These are the  polynomials associated to the homogeneous elementary words 
considered by Bender-Dunne:
\begin{equation}
Q_{n,k}(z) = \frac{1}{\imath^{n} n!} 
\mathcal{O} \left( q^{k}p^{n} q^{n-k} \right).
\label{QmapO}
\end{equation}
\end{definition}

\begin{theorem}
\label{q-li}
The polynomials $\{ Q_{n,k}(z): \, 0 \leq k \leq n \}$ form a basis for the 
vector space of polynomials of degree at most $n$.
\end{theorem}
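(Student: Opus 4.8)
The plan is to exploit a dimension count: there are exactly $n+1$ polynomials $Q_{n,0},\dots,Q_{n,n}$, and the space of polynomials of degree at most $n$ has dimension $n+1$, so it suffices to prove that the $Q_{n,k}$ are linearly independent. First I would record that each $Q_{n,k}(z)$ has degree exactly $n$: in the product representation \eqref{formula-qn}, the $n$ factors $-\imath z-\tfrac12+k-\ell$ are each linear in $z$, so the leading coefficient is $(-\imath)^n/n!$, the same for every $k$. This confirms that all the $Q_{n,k}$ genuinely live in the target space and have full degree.

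The heart of the argument will be to evaluate the $Q_{n,k}$ at a well-chosen set of $n+1$ nodes, namely $z=\imath(m+\tfrac12)$ for $m=0,1,\dots,n$. At such a point one has $-\imath z-\tfrac12+k=(m+\tfrac12)-\tfrac12+k=m+k$, so that
\[
Q_{n,k}\!\left(\imath\!\left(m+\tfrac12\right)\right)=\binom{m+k}{n}.
\]
Alternatively I could derive the same values from the operator realization \eqref{QmapO} together with $q^{k}p^{n}q^{n-k}(x^{m})=\imath^{n}n!\binom{m+k}{n}x^{m}$ and $z(x^{m})=\imath(m+\tfrac12)x^{m}$; either route gives the same evaluation table.

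To finish, I would suppose a linear dependence $\sum_{k=0}^{n}c_{k}Q_{n,k}(z)=0$ holds identically, and evaluate it at the $n+1$ nodes above. This yields the linear system $\sum_{k=0}^{n}\binom{m+k}{n}c_{k}=0$ for $0\le m\le n$, whose coefficient matrix is precisely the Hankel matrix $M_{n}=\big[\binom{i+j}{n}\big]_{0\le i,j\le n}$ whose inverse was computed explicitly in the preceding Lemma. Since $M_{n}$ is invertible, the only solution is $c_{0}=\dots=c_{n}=0$, establishing linear independence and hence that the $Q_{n,k}$ form a basis. I expect no serious obstacle: the whole argument collapses to the already-proven invertibility of $M_{n}$. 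The one point meriting care is the observation that the special nodes $\imath(m+\tfrac12)$ are exactly those at which the shifted binomial in \eqref{formula-qn} simplifies to $\binom{m+k}{n}$, so that the evaluation matrix is the tractable Hankel matrix rather than some less convenient matrix.
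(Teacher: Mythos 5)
Your proof is correct, but it takes a genuinely different route from the paper's. The paper works in the monomial basis: using the Stirling-number expansion \eqref{formula-1a} together with \eqref{QmapO}, it writes the coefficient of $x^{r}$ in $Q_{n,k}(x)$ as
\begin{equation*}
\frac{(-\imath)^{r}}{n!} \sum_{j=0}^{n-r} s(n,j+r) \binom{j+r}{r} \left( k - \tfrac{1}{2} \right)^{j},
\end{equation*}
assembles these into an $(n+1)\times(n+1)$ matrix $S$, and proves invertibility by factoring $S = XY$, where $X$ is triangular with respect to the anti-diagonal (built from Stirling numbers) and $Y$ is a Vandermonde matrix, obtaining $\det S = \prod_{k=1}^{n} k^{k}$. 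You instead dualize the problem: rather than expanding in coefficients, you evaluate a putative dependence relation at the $n+1$ nodes $z = \imath\left(m+\tfrac{1}{2}\right)$, where $Q_{n,k}$ collapses to $\binom{m+k}{n}$, so the relation becomes the linear system with coefficient matrix $M_{n} = \left[ \binom{i+j}{n} \right]$ --- exactly the Hankel matrix whose explicit inverse the paper established in the Lemma following Theorem \ref{formula-an}. Both arguments are sound, and your evaluation values are right (indeed they coincide with \eqref{formula-1} specialized to a single word $q^{k}p^{n}q^{n-k}$). Yours is the more economical proof, since it reuses an already-proven Lemma instead of introducing the Stirling/Vandermonde machinery, and it is stylistically consistent with the paper's own proofs of Theorems \ref{formula-an} and \ref{formula-pn}, which rest on evaluation at these same nodes. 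What the paper's longer computation buys is the explicit determinant $\prod_{k=1}^{n} k^{k}$ of the coefficient-transformation matrix, quantitative information your interpolation argument does not produce; what your argument buys is brevity and the clear conceptual point that invertibility of the evaluation matrix at the natural nodes is all that is needed.
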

\begin{proof}
Each $Q_{n,k}(z)$ is a polynomial of degree $n$, so it 
suffices to establish their linear independence. Fix $n \in \mathbb{N}$. 
From \eqref{formula-1a} and \eqref{QmapO} it follows that the 
coefficient of $x^{r}$ in $Q_{n,k}(x)$ is 
\begin{equation}
\frac{(-\imath)^{r}}{n!} \sum_{j=0}^{n-r} s(n,j+r) \binom{j+r}{r} 
\left( k- \tfrac{1}{2} \right)^{j}.
\end{equation}
\noindent
Now suppose 
$\begin{displaystyle} \sum_{k=0}^{n} u_{k}(n) Q_{n,k}(x) = 0. 
\end{displaystyle}$ The 
vanishing of the coefficient of $x^{r}$ gives the system 
of equations 
\begin{equation}
\sum_{k=0}^{n} u_{k}(n) \sum_{j=0}^{n-r} s(n,j+r) \binom{j+r}{r} 
\left( k - \tfrac{1}{2} \right)^{j} = 0 
\label{ld1}
\end{equation}
\noindent
for $0 \leq r, k \leq n$. Let $S = (S_{r,k})$ be the $(n+1) \times (n+1)$ 
matrix with entries 
\begin{equation}
S_{r,k} =  \sum_{j=0}^{n-r} s(n,j+r) \binom{j+r}{r} 
\left( k - \tfrac{1}{2}  \right)^{j},
\end{equation}
\noindent
so that \eqref{ld1} is $Su = 0$, where $u$ is the vector $(u_{k}(n))$. The 
independence of $\{ Q_{n,k}(x) \}$ is 
equivalent to the invertibility of $S$. Observe the factorzation
$S = XY$ where  $X = (X_{a,b})$ and $Y = (Y_{a,b})$ with 
$X_{a,b} =  \binom{a+b}{b} s(n,a+b)$  and  
$Y_{a,b} = \left( b - \tfrac{1}{2} \right)^{a}$. The 
matrix $X$ is upper triangular, $x_{a,b} = 0$ if $a+b > n$ and $Y$ 
is a Vandermonde matrix. Therefore 
\begin{equation}
\det X = \prod_{a=0}^{n} \binom{n}{a} \text{ and }
\det Y = \prod_{j < k } \left[ \left( k - \tfrac{1}{2} \right) - 
\left( j - \tfrac{1}{2} \right) \right] =
\prod_{i=0}^{n} i!,
\end{equation}
\noindent
proving that $\begin{displaystyle} \det S = \prod_{k=1}^{n} k^{k} 
\end{displaystyle}$ and $S = XY$ is invertible.
\end{proof}

\begin{corollary}
\label{thm-basis}
The balanced words 
$\left\{ q^{k}p^{n}q^{n-k}: \, 0 \leq k \leq n \right\}$ form 
a basis for the class of balanced words of weight $n$. 
\end{corollary}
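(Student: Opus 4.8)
The plan is to push the problem through the order map $\mathcal{O}$ and appeal to Theorem \ref{q-li}. The key link is the identity \eqref{QmapO}, which says that $\mathcal{O}$ sends each elementary balanced word to a scalar multiple of the basis polynomial $Q_{n,k}$: namely $\mathcal{O}(q^{k}p^{n}q^{n-k}) = \imath^{n}n!\,Q_{n,k}(z)$. Since $\mathcal{O}$ is a genuine, well-defined function on balanced words by Theorem \ref{thm-balanced}, any vanishing linear combination $\sum_{k=0}^{n} c_{k}\,q^{k}p^{n}q^{n-k} = 0$ in $\mathcal{A}$ is carried to $\imath^{n}n!\sum_{k=0}^{n} c_{k}\,Q_{n,k}(z) = 0$. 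The linear independence of the $Q_{n,k}$ granted by Theorem \ref{q-li} then forces every $c_{k}=0$, so the $n+1$ words $\{q^{k}p^{n}q^{n-k}:0\le k\le n\}$ are linearly independent. This is the easy half.

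The real content is the spanning statement, so next I would pin down the space in question and bound its dimension. Let $\mathcal{B}_{n}$ denote the linear span, inside $\mathcal{A}$, of all balanced words with exactly $n$ copies of $p$ and $n$ copies of $q$. I claim $\dim \mathcal{B}_{n}\le n+1$. The idea is to normal order: using $pq = qp - \imath$ repeatedly to move every $q$ to the left of every $p$, one rewrites any such word as a linear combination of monomials $q^{a}p^{b}$. The relation $pq=qp-\imath$ preserves the charge $\#q-\#p$ (both $qp$ and the constant $\imath$ have charge zero), and a balanced word has charge zero, so every surviving monomial satisfies $a=b$; moreover the number of $q$'s never increases during reduction, so $a\le n$. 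Hence $\mathcal{B}_{n}$ is spanned by the $n+1$ monomials $\{q^{a}p^{a}:0\le a\le n\}$, which gives the dimension bound.

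Combining the two parts finishes the argument: the words $\{q^{k}p^{n}q^{n-k}:0\le k\le n\}$ are $n+1$ linearly independent elements of a space of dimension at most $n+1$, hence a basis of $\mathcal{B}_{n}$. I expect the main obstacle to be not the algebra but the bookkeeping in the spanning step, that is, formulating exactly what ``the class of balanced words of weight $n$'' means and verifying that normal ordering keeps one inside the span of $\{q^{a}p^{a}\}$; once the charge invariant is isolated this is routine. A cosmetic alternative that avoids the explicit reduction is to observe that $\mathcal{O}$ maps each $q^{a}p^{a}$ to $\imath^{a}a!\,Q_{a,a}(z)$, a polynomial of degree exactly $a$, so the images have distinct degrees $0,1,\dots,n$ and the count $n+1$ is forced on both the domain and target side.
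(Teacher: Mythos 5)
Your proof is correct, and its independence half is exactly the route the paper intends: the corollary appears immediately after Theorem \ref{q-li} with no separate proof, the point being that $\mathcal{O}$ carries the words $q^{k}p^{n}q^{n-k}$ to the scalar multiples $\imath^{n}n!\,Q_{n,k}$ of a basis of the polynomials of degree at most $n$. Where you genuinely diverge is in the spanning half. The paper's implicit argument runs through Theorem \ref{thm-balanced}: a balanced word of weight $n$ equals, in $\mathcal{A}$, a polynomial in $z$ of degree at most $n$ (the degree bound is implicit in the induction there, though never stated), and since the $Q_{n,k}$ span that polynomial space, every such word lies in the span of the elementary words. You instead bound the dimension directly by normal ordering with $pq = qp - \imath$, using the charge invariant $\#q - \#p$ to see that a weight-$n$ balanced word reduces to a combination of $q^{a}p^{a}$ with $a \le n$, and then invoke the count $n+1$ against your $n+1$ independent elements. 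This buys self-containedness: you never need the unstated degree bound from Theorem \ref{thm-balanced}, and the charge argument is elementary; what the paper's route buys is uniformity, since everything is read off from the single statement of Theorem \ref{q-li}. One small point common to your write-up and the paper's: passing from $\sum_{k} c_{k}\,q^{k}p^{n}q^{n-k} = 0$ in $\mathcal{A}$ to $\sum_{k} c_{k}\,Q_{n,k} = 0$ as a polynomial requires that distinct polynomials in $z$ be distinct elements of $\mathcal{A}$, i.e., uniqueness of the polynomial representation, not merely the existence that Theorem \ref{thm-balanced} provides. This follows from the realization $p = x$, $q = \imath\,\frac{d}{dx}$ used in the proof of Theorem \ref{formula-an}: a polynomial $P(z)$ vanishing in $\mathcal{A}$ satisfies $P\left(\imath\left(m+\tfrac{1}{2}\right)\right) = 0$ for all $m \ge 0$, hence $P \equiv 0$. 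That deserves one sentence in a complete version, but it is a gloss the paper itself makes silently, not a gap in your argument.
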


The next result phrases Theorem \ref{q-li} in the language of the Weyl 
algebra $\mathcal{A}_{1}$.

\begin{theorem}
The set 
$ S = \left\{ R^{j} D^{k}: \, j, \, k \geq 0 \right\} $ 
is a basis for $\mathcal{A}_{1}$.
\end{theorem}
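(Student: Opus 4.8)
The plan is to verify the two defining properties of a basis separately: that $S$ spans $\mathcal{A}_1$ and that it is linearly independent. Spanning is just the existence of the normal form asserted when $\mathcal{A}_1$ was introduced, and I would justify it by a terminating rewriting argument. The relation $RD-DR=1$ gives $DR=RD-1$, so any occurrence of the pattern $DR$ inside a monomial word may be replaced by $RD-1$, pushing every $D$ to the right of every $R$. To see this terminates, I would order monomials by the pair (total number of letters, number of \emph{inversions}, meaning pairs consisting of a $D$ lying to the left of an $R$): replacing $DR$ by $RD$ leaves the length unchanged while lowering the inversion count by exactly one, and the $-1$ term shortens the word by two letters. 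Each step strictly decreases a well-founded quantity, so the process ends in a linear combination of normal-ordered monomials $R^jD^k$; since $\mathcal{A}_1$ is spanned by words, $S$ spans.

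The substance of the theorem is linear independence, and the plan is to obtain it from a single faithful representation rather than from delicate cancellations inside $\mathcal{A}_1$ itself. Following the conventions $R=p$, $D=\imath q$ used above, I would let $\mathcal{A}_1$ act on $\mathbb{C}[x]$ by $\rho(R)=$ ``multiplication by $x$'' and $\rho(D)=-\tfrac{d}{dx}$. A one-line check gives $\rho(R)\rho(D)-\rho(D)\rho(R)=\mathrm{id}$, so $\rho$ respects the relation and extends to an algebra homomorphism $\rho\colon\mathcal{A}_1\to\operatorname{End}(\mathbb{C}[x])$. The effect of a normal-ordered monomial on the monomial basis is then transparent:
\[
\rho\!\left(R^jD^k\right)x^m=(-1)^k\,m(m-1)\cdots(m-k+1)\,x^{m+j-k},\qquad m\ge 0,
\]
the coefficient being the falling factorial, which vanishes for $0\le m<k$ (matching $\tfrac{d^k}{dx^k}x^m=0$ there).

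Linear independence now follows by reading off coefficients. Suppose $\sum_{j,k}\alpha_{j,k}R^jD^k=0$ in $\mathcal{A}_1$; applying $\rho$ and evaluating on $x^m$ gives $\sum_{j,k}(-1)^k\alpha_{j,k}\,m(m-1)\cdots(m-k+1)\,x^{m+j-k}=0$ for every $m\ge 0$. Two monomials $R^jD^k$ shift the degree of $x^m$ by the same amount exactly when $j-k$ agrees, so for each fixed $d=j-k$ the vanishing of the coefficient of $x^{m+d}$ yields $\sum_k(-1)^k\alpha_{k+d,k}\,m(m-1)\cdots(m-k+1)=0$ for all $m\ge 0$. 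The left-hand side is a polynomial in $m$ with infinitely many roots, hence identically zero; and since the falling factorials $m(m-1)\cdots(m-k+1)$ have pairwise distinct degrees in $m$, they are linearly independent, forcing every $\alpha_{k+d,k}=0$. As $d$ ranges over $\mathbb{Z}$ this kills all the coefficients, proving independence and hence the theorem.

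I expect the main obstacle to be psychological rather than technical: the temptation is to prove independence inside $\mathcal{A}_1$ by juggling the non-homogeneous relation, exactly the bookkeeping against which Theorem \ref{thm-balanced} cautions. Outsourcing the argument to the faithful action on $\mathbb{C}[x]$ sidesteps this, and the only genuine content is the classical fact that a nonzero polynomial cannot vanish at all non-negative integers. I would also remark that this result refines Corollary \ref{thm-basis}: the balanced words correspond to the degree-preserving case $j=k$ (that is, $d=0$), so Corollary \ref{thm-basis} and Theorem \ref{q-li} are precisely the $d=0$ slice of the present statement, and the grading by $d=j-k$ is what separates the remaining slices.
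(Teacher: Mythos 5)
Your proof is correct, and while both you and the paper establish linear independence via a representation, the representations --- and the work they entail --- are genuinely different. The paper acts on the \emph{two}-variable polynomial ring $\mathbb{C}[x,y]$, with $R(x^{j}y^{k})=x^{j+1}y^{k}$ and $D(x^{j}y^{k})=jx^{j-1}y^{k}+x^{j}y^{k+1}$: the extra variable $y$ records the $D$-degree, so a \emph{single} evaluation ends the proof, since $D^{k}(1)=y^{k}$ and $R^{j}(y^{k})=x^{j}y^{k}$ give $L(1)=\sum_{j,k}\alpha_{j,k}x^{j}y^{k}=0$, and distinct commutative monomials are independent. Your one-variable action on $\mathbb{C}[x]$ cannot read off the $D$-degree from a single vector, which is why you must evaluate on every $x^{m}$, grade by the shift $d=j-k$, and invoke two further facts (a polynomial vanishing at every non-negative integer is identically zero; falling factorials of distinct degrees are linearly independent). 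So the paper's choice of module buys brevity, while yours buys two things: it shows along the way that the standard, classical polynomial representation of the Weyl algebra separates the normal-ordered monomials (a fact of independent interest), and it actually supplies the spanning half --- termination of the rewriting $DR\to RD-1$ under the well-founded (length, inversion) order --- which the paper merely asserts when it introduces the normal form. A small additional point in your favor: your operators honestly satisfy $\rho(R)\rho(D)-\rho(D)\rho(R)=\mathrm{id}$, matching the paper's dictionary $R=p$, $D=\imath q=-\tfrac{d}{dx}$, whereas the operators printed in the paper satisfy $RD-DR=-1$ under the usual composition convention (apply $D$ first), a harmless sign slip repaired by replacing $D$ with $-D$. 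Your closing observation is also accurate: via $\mathcal{O}(q^{n}p^{n})=\imath^{n}\mathcal{O}(D^{n}R^{n})$, Theorem \ref{q-li} and Corollary \ref{thm-basis} correspond to the $d=0$ slice of the present statement.
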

\begin{proof}
As before, it suffices to verify linear independence. Let 
$K = \mathbb{C}[x,y]$ be a commutative polynomial ring, with basis 
$\{ x^{j}y^{k}: \, j, \, k \geq 0 \}$ over $\mathbb{C}$. Define the linear 
operators $R$ and $D$ on $K$ by
\begin{equation}
R \left( x^{j} y^{k} \right) = x^{j+1}y^{k} \text{ and }
D \left( x^{j} y^{k} \right) = jx^{j-1}y^{k} + x^{j}y^{k+1}.
\end{equation}
\noindent
A direct calculation shows
that $RD-DR = 1$ on $K$, and hence $\mathbb{C}[x,y]$ is a 
representation of the Weyl
algebra $\mathcal{A}_{1}$. To verify linear independence, suppose 
\begin{equation}
L:= \sum_{j,k} \alpha_{j,k} R^{j}D^{k} = 0. 
\end{equation}
\noindent
A direct computation gives $D^{k}(1) = y^{k}$ and $R^{j}(y^{k}) = 
x^{j}y^{k}$. Thus, the 
value $L(1) = 0$ gives $\alpha_{j,k} = 0$ which proves independence.
\end{proof}

\smallskip

\noindent
\textbf{Problem}. A word $w = w_{0}w_{1} \cdots w_{n}$ in $\mathcal{A}_{1}$ 
(where each $w_{k}=p$ or $q$) is a \textit{palindrome} if 
$w_{n-k}= w_{k}$ for $0 \leq k \leq n$. The `adjoint' of a monomial word 
$w= w_{0}w_{1} \cdots w_{n}$ is the word $w^{*} = w_{n} \cdots w_{1}w_{0}$. 
This is extended to $z$ in $\mathcal{A}_{1}$ by linearity. 
The element $z$ is called Hermitian if $z = z^{*}$. For instance $p$ and 
$q$ are Hermitians, but $pq$ is not, since $(pq)^{*} = q^{*}p^{*} = qp$. 
\noindent
\textit{Question}: Is it true that a monomial word $w$ is Hermitian if and 
only if $w$ is a palindrome? It is clear that if $w$ is a palindrome, then 
$w$ is Hermitian. The questions is to decide on the converse. 

\section{Polynomials versus pyramids}
\label{sec-relations}

This section discusses how certain properties of the pyramids $\{ a_{n,k} \}$ 
are reflected on the corresponding polynomials $P_{n}(z)$.

\begin{proposition}
\label{prop-monic}
The polynomial $P_{n}(z)$ is monic if and only if $\{ a_{n,k} \}$ is 
normalized by $a_{n,0} + a_{n,1} + \cdots + a_{n,n} = 1$.
\end{proposition}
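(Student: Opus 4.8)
The plan is to extract the coefficient of $z^{n}$ in $P_{n}(z)$ directly from Theorem \ref{formula-pn}, since monicity is exactly the statement $b_{n,n}=1$. Theorem \ref{formula-pn} expresses
\[
P_{n}(z) = \imath^{n} n! \sum_{k=0}^{n} a_{n,k} \binom{-\imath z - \tfrac{1}{2} + k}{n},
\]
so the first step is to determine the top-degree term of each summand $Q_{n,k}(z)=\binom{-\imath z-\tfrac{1}{2}+k}{n}$ as a polynomial in $z$.

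Reading $Q_{n,k}$ in its product form \eqref{formula-qn}, namely $Q_{n,k}(z)=\frac{1}{n!}\prod_{\ell=0}^{n-1}(-\imath z-\tfrac{1}{2}+k-\ell)$, the coefficient of $z^{n}$ arises by selecting the factor $-\imath z$ from each of the $n$ terms of the product. Hence that coefficient equals $(-\imath)^{n}/n!$, and the essential observation is that it does not depend on $k$. Summing over $k$ then gives the leading coefficient of $P_{n}$ as
\[
b_{n,n} = \imath^{n} n! \cdot \frac{(-\imath)^{n}}{n!} \sum_{k=0}^{n} a_{n,k} = \bigl(\imath\cdot(-\imath)\bigr)^{n} \sum_{k=0}^{n} a_{n,k} = \sum_{k=0}^{n} a_{n,k},
\]
using $\imath\cdot(-\imath)=1$ to collapse the prefactor. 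The desired equivalence follows at once: $b_{n,n}=1$ precisely when $a_{n,0}+a_{n,1}+\cdots+a_{n,n}=1$.

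I expect no genuine obstacle in this argument; the single point demanding care is the bookkeeping of the complex constants, ensuring that the $\imath^{n}n!$ coming from the map $\mathcal{O}$ cancels exactly against the $(-\imath)^{n}/n!$ produced by the degree-$n$ part of each $Q_{n,k}$. A conceivable alternative would start from the interpolation identity \eqref{formula-1}, but recovering a single coefficient from the $n+1$ sampled values $P_{n}(\imath(m+\tfrac{1}{2}))$ is more laborious than reading the leading term off the closed form, so I would proceed with the direct extraction above.
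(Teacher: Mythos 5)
Your proof is correct and follows essentially the same route as the paper: both extract the leading coefficient $(-\imath)^{n}/n!$ of each $\binom{-\imath z - \tfrac{1}{2}+k}{n}$ from its product form and note that the prefactor $\imath^{n}n!$ in Theorem \ref{formula-pn} cancels it, so the leading coefficient of $P_{n}$ is $\sum_{k} a_{n,k}$. No differences of substance; your write-up just spells out the constant bookkeeping more explicitly.
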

\begin{proof}
The polynomial 
\begin{equation}
\binom{-\imath z - \tfrac{1}{2} + k}{n} = 
\frac{1}{n!} \prod_{j=0}^{n-1} (-\imath z - \tfrac{1}{2} +k-j)
\end{equation}
\noindent
has leading coefficient $(- \imath )^{n}/n!$. Theorem \ref{formula-pn}
now shows that the leading 
coefficient of $P_{n}(z)$ is the sum of $\{ a_{n,k} \}$.
\end{proof}

The next statement clarifies the condition of hermitian-symmetry 
imposed on the pyramids \cite{bender-1988a}. The 
analysis begins the following observation.

\begin{lemma}
\label{q-conj}
The polynomial $Q_{n,k}(x)$ defined in \eqref{formula-qn}, with 
$x \in \mathbb{R}$,  satisfies  the symmetry identity
$Q_{n,k}^{*}(x) = (-1)^{n} Q_{n,n-k}(x).$
\end{lemma}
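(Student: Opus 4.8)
The plan is to read $Q_{n,k}^{*}$ as the polynomial obtained from $Q_{n,k}$ by conjugating its coefficients. Since $z = \tfrac{1}{2}(qp+pq)$ is Hermitian, this interpretation agrees with the adjoint extended by linearity, and for real $x$ it reduces to $Q_{n,k}^{*}(x) = \overline{Q_{n,k}(x)}$. Thus the identity to establish is $\overline{Q_{n,k}(x)} = (-1)^{n} Q_{n,n-k}(x)$ for $x \in \mathbb{R}$, and the whole argument can be carried out on the explicit product form.

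First I would start from the product representation in \eqref{formula-qn},
\[
Q_{n,k}(x) = \frac{1}{n!} \prod_{\ell=0}^{n-1} \left( -\imath x - \tfrac{1}{2} + k - \ell \right),
\]
and conjugate. Because $x$, $k$, $\ell$, and $\tfrac{1}{2}$ are all real, conjugation only flips the sign of the imaginary part of each factor, so
\[
\overline{Q_{n,k}(x)} = \frac{1}{n!} \prod_{\ell=0}^{n-1} \left( \imath x - \tfrac{1}{2} + k - \ell \right).
\]
Next I would process the right-hand side: pulling the scalar $(-1)^{n}$ inside the $n$-fold product of $Q_{n,n-k}$, one sign per factor, gives
\[
(-1)^{n} Q_{n,n-k}(x) = \frac{1}{n!} \prod_{\ell=0}^{n-1} \left( \imath x + \tfrac{1}{2} - n + k + \ell \right).
\]
Finally I would match the two products by the substitution $\ell \mapsto n-1-\ell$ in the expression for $\overline{Q_{n,k}(x)}$, which sends its generic factor $\imath x - \tfrac{1}{2} + k - \ell$ to $\imath x + \tfrac{1}{2} - n + k + \ell$, exactly the generic factor of $(-1)^{n} Q_{n,n-k}(x)$. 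Comparing term-by-term yields the claim.

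The only genuine obstacle is bookkeeping: one must verify that conjugation, the extraction of $(-1)^{n}$, and the index reversal together align the two $n$-element products factor-for-factor, and not merely as polynomials of the same degree. A cleaner route that avoids index-chasing is to pass through the algebra via \eqref{QmapO}. The adjoint sends $q^{k}p^{n}q^{n-k}$ to $q^{n-k}p^{n}q^{k}$, since $p$ and $q$ are Hermitian, and conjugates the scalar $\imath^{-n}$; hence
\[
Q_{n,k}^{*} = (-1)^{n}\,\imath^{-n}(n!)^{-1}\,\mathcal{O}\!\left(q^{n-k}p^{n}q^{k}\right) = (-1)^{n} Q_{n,n-k},
\]
which reproduces the symmetry directly. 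I expect to present the product computation as the main proof and mention the algebraic version as a remark.
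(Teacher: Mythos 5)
Your main argument is correct and is essentially the paper's own proof: conjugate the product in \eqref{formula-qn} factor by factor, pull out one sign per factor to obtain $(-1)^{n}$, and identify the resulting product with $Q_{n,n-k}(x)$ via the reindexing $\ell \mapsto n-1-\ell$ (which the paper leaves implicit). Your supplementary algebraic route through \eqref{QmapO} is also sound, provided one notes that the adjoint is well defined on $\mathcal{A}$ (it preserves the relation $qp-pq=\imath$) and that the polynomial representation in $z$ is unique, both of which the paper's framework supplies.
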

\begin{proof}
This follows directly from 
\begin{equation*}
Q_{n,k}^{*}(x)  =  \frac{1}{n!} \prod_{\ell=0}^{n-1} 
\left(- \imath x - \tfrac{1}{2} + k - \ell \right)^{*} 
=  \frac{(-1)^{n}}{n!} \prod_{\ell=0}^{n-1} 
\left(- \imath x + \tfrac{1}{2} - k + \ell \right).
\end{equation*}
\end{proof}

The next result characterizes real polynomials $P_{n}$. 

\begin{theorem}
\label{poly-real}
The coefficients $\{ b_{n,r} \}$ of $P_{n}(z)$ 
are real if and only if the coefficients $\{ a_{n,k} \}$ are 
hermitian-symmetric; that is if $a_{n,k} = a_{n,n-k}^{*}$. 
\end{theorem}
\begin{proof}
The identity 
$\begin{displaystyle} P_{n}(x)^{*}   =  
(- \imath )^{n} n! \sum_{k=0}^{n} a_{n,k}^{*} 
Q_{n,k}(x) \end{displaystyle}$ and Lemma \ref{q-conj} show that 
$b_{n,r}$ real is equivalent to 
\begin{equation}
\sum_{k=0}^{n} a_{n,k}Q_{n,k}(x) = 
\sum_{k=0}^{n} a_{n,n-k}^{*}Q_{n,k}(x).
\end{equation}
\noindent
The result now follows from Theorem \ref{q-li}.
\end{proof}

\begin{proposition}
\label{parity-consistent}
Assume the coefficients $\{ a_{n,k} \}$ are real and symmetric. Then
$P_{n}$ has the same parity as $n$. 
\end{proposition}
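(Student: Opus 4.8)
The plan is to establish the polynomial identity $P_{n}(-z) = (-1)^{n} P_{n}(z)$, which is precisely the assertion that $P_{n}$ has the same parity as $n$. The engine driving this is a reflection identity for the basis polynomials $Q_{n,k}$ from \eqref{formula-qn}, entirely parallel to the conjugation identity already recorded in Lemma \ref{q-conj}.

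First I would prove the reflection rule
\[
Q_{n,k}(-z) = (-1)^{n} Q_{n,n-k}(z).
\]
The quickest route is to observe that, for real $x$, replacing $z$ by $-z$ inside a factor $-\imath z - \tfrac{1}{2} + k - \ell$ turns it into $\imath x - \tfrac{1}{2} + k - \ell$, which is exactly the complex conjugate of $-\imath x - \tfrac{1}{2} + k - \ell$. Hence $Q_{n,k}(-x) = Q_{n,k}^{*}(x)$ for every real $x$, and Lemma \ref{q-conj} then gives $Q_{n,k}(-x) = (-1)^{n} Q_{n,n-k}(x)$. Since both sides are polynomials agreeing at infinitely many points, the identity holds as a polynomial identity in $z$. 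Alternatively, one can verify directly that the two products $Q_{n,k}(-z)$ and $(-1)^{n} Q_{n,n-k}(z)$ range over the same multiset of linear factors of the form $\imath z + c$, which yields the identity without passing through the reals.

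Next I would feed this into the representation $P_{n}(z) = \imath^{n} n! \sum_{k=0}^{n} a_{n,k} Q_{n,k}(z)$ of Theorem \ref{formula-pn}. Substituting $-z$ and applying the reflection rule gives
\[
P_{n}(-z) = \imath^{n} n! \sum_{k=0}^{n} a_{n,k} Q_{n,k}(-z)
 = (-1)^{n} \imath^{n} n! \sum_{k=0}^{n} a_{n,k} Q_{n,n-k}(z).
\]
Reindexing $k \mapsto n-k$ and invoking the symmetry hypothesis $a_{n,k} = a_{n,n-k}$ collapses the sum back into $\sum_{k} a_{n,k} Q_{n,k}(z)$, so that $P_{n}(-z) = (-1)^{n} P_{n}(z)$, as desired. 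The reality assumption on $\{a_{n,k}\}$ is not actually needed for the parity argument itself; combined with symmetry it gives hermitian-symmetry, so that by Theorem \ref{poly-real} the coefficients $\{b_{n,r}\}$ are real and the parity statement refers to an honest real polynomial.

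Little should go wrong here. The only delicate point is the bookkeeping in the reflection identity, namely checking that negating the argument and conjugating produce the same multiset of factors; but this is exactly the computation already carried out in the proof of Lemma \ref{q-conj}. Once that identity is in hand, the index reversal together with $a_{n,k} = a_{n,n-k}$ finishes the proof with no further effort.
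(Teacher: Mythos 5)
Your proposal is correct and follows essentially the same route as the paper: the paper's proof consists precisely of the reflection identity $Q_{n,k}(-x) = (-1)^{n}Q_{n,n-k}(x)$ (obtained, as you note, either by redoing the factor-by-factor computation of Lemma \ref{q-conj} or by combining that lemma with $Q_{n,k}(-x)=Q_{n,k}^{*}(x)$ for real $x$), followed by the reindexing $k \mapsto n-k$ under the symmetry $a_{n,k}=a_{n,n-k}$. Your write-up merely makes explicit the substitution into \eqref{poly-nice2} and the reindexing step, which the paper leaves implicit.
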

\begin{proof}
Use the identity
$Q_{n,k}(-x) = (-1)^{n}Q_{n,n-k}(x)$
established in the same way as in the proof of Lemma \ref{q-conj}.
\end{proof}

\section{Necessary conditions for orthogonality}
\label{sec-neces}

Properties of the pyramid $\{ a_{n,k} \}$ reflect on those of the 
associated sequence of polynomials $\{ P_{n} \}$. For instance, if 
$\{ a_{n,k} \}$ is normalized (total sum equal to $1$), real
and symmetric $(a_{n,k} = a_{n,n-k})$, then $\{ P_{n} \}$
are monic, with real coefficients and $P_{n}$ has the same parity as $n$. The 
question 
considered in this section is to determine conditions on
$\{ a_{n,k} \}$ that yield orthogonal polynomials.  

Recall that a family of 
polynomials $\{ C_{n} \}$ is called orthogonal if there is a positive 
weight function $w(z)$  such that 
\begin{equation}
\langle C_{n}, C_{m} \rangle :=  \int_{\mathbb{R}} C_{n}(z)C_{m}(z) w(z) \, dz 
= 
\begin{cases}
w_{n} > 0 & \text{ if } n = m, \\
 0 & \text{ if } n \neq  m.
\end{cases}
\end{equation}

Now assume that 
$\begin{displaystyle} C_{n}(z) = \sum_{k=0}^{n} c_{n,k}z^{k} 
\end{displaystyle}$  is 
a family of monic, orthogonal polynomials with $C_{n}$ of the same 
parity as $n$. The 
orthogonality of $\{ C_{n} \}$ yields a sequence of algebraic equations 
that the coefficients $\{ c_{n,k} \}$ must satisfy. For instance, 
$\langle C_{0}, C_{2} \rangle = 0$ gives 
\begin{equation}
\left[z^{2} \right] + c_{2,0} \left[ 1 \right] = 0,
\end{equation}
\noindent
where 
\begin{equation}
\left[ z^{a} \right] = \int_{\mathbb{R}} z^{a} w(z) \, dz.
\end{equation}
\noindent
Similarly, the orthogonality of the pairs $\{ C_{0}, C_{4} \}$ and 
$\{ C_{1}, C_{3} \}$ yield 
\begin{eqnarray}
\left[ z^{4} \right] + c_{4,2} \left[ z^{2} \right]  
+ c_{4,0} \left[ 1 \right] & = & 0 \label{eqn-1} \\
\left[ z^{4} \right] + c_{3,1} \left[ z^{2} \right]  & = & 0.
\nonumber
\end{eqnarray}
\noindent
The vanishing of the corresponding determinant gives 
\begin{equation}
c_{4,0} + c_{2,0}c_{3,1} - c_{2,0}c_{4,2} = 0.
\label{condition-1}
\end{equation}
\noindent
Looking at the first six polynomials gives, among many, the relation 
\begin{equation}
c_{2,0}c_{5,1} + c_{4,0}c_{5,3} - c_{2,0}c_{4,2}c_{5,3} + c_{6,0} 
-c_{2,0}c_{6,2} - c_{4,0}c_{6,4} + c_{2,0}c_{4,2}c_{6,4} = 0.
\label{condition-2}
\end{equation}

\begin{example}
The \textit{symmetric ordering} has $a_{n,0} = a_{n,n} = \tfrac{1}{2}$
and $a_{n,k} = 0 $ if $k \neq 0, n$. Theorem \ref{formula-pn} gives 
\begin{equation}
P_{n}(z) = \frac{\imath^{n}}{2} 
\left[ (- \imath z + \tfrac{1}{2} - n )_{n} + 
(- \imath z + \tfrac{1}{2} )_{n}  \right].
\end{equation}
\noindent
The first few values appear in \cite{bender-1988a}: 
\begin{equation*}
P_{0}(z) = 1, \, 
P_{1}(z) = z, \, P_{2}(z) = z^{2} - \tfrac{3}{4}, \,
P_{3}(z) = z^{3} - \tfrac{23}{4}z, \, 
P_{4}(z) = z^{4} - \tfrac{43}{4}z^{2} + \tfrac{105}{16}. 
\end{equation*}
\noindent
This sequence of polynomials is not orthogonal since condition
\eqref{condition-1} is not satisfied.
\end{example}

\begin{example}
The polynomials corresponding to the \textit{Born-Jordan ordering}
have $a_{n,k} = \tfrac{1}{n+1}$ for $0 \leq k \leq n$. Theorem 
\ref{formula-pn} gives 
\begin{equation}
P_{n}(z) = \frac{\imath^{n}}{n+1} \sum_{k=0}^{n}  
\left( - \imath z + \tfrac{1}{2} - k \right)_{n}.
\end{equation}
\noindent
The first few values also appear in \cite{bender-1988a}: 
\begin{equation*}
P_{0}(z) = 1, \, 
P_{1}(z) = z, \,\, P_{2}(z) = z^{2} - \tfrac{5}{12}, \,
P_{3}(z) = z^{3} - \tfrac{11}{4}z, \, 
P_{4}(z) = z^{4} - \tfrac{19}{2}z^{2} + \tfrac{189}{80}. 
\end{equation*}
\noindent
Condition \eqref{condition-1} does not hold, so this sequence 
is not orthogonal.
\end{example}

\begin{example}
\label{weyl-ortho}
The \textit{Weyl ordering}  has
$a_{n,k} = 2^{-n}\binom{n}{k}$, and 
\eqref{poly-nice2} give the polynomials
\begin{equation}
P_{n}(z) = \frac{\imath^{n}}{2^{n}} 
\sum_{k=0}^{n} \binom{n}{k} ( - \imath z + \tfrac{1}{2} - k)_{n}.
\label{weyl-10}
\end{equation}
\noindent
The first few values may be found in \cite{bender-1988a}: 
\begin{equation*}
P_{0}(z) = 1, \, 
P_{1}(z) = z, \,\, P_{2}(z) = z^{2} - \tfrac{1}{4}, \,
P_{3}(z) = z^{3} - \tfrac{5}{4}z, \, 
P_{4}(z) = z^{4} - \tfrac{7}{2}z^{2} + \tfrac{9}{16}. 
\end{equation*}
\noindent
The condition \eqref{condition-1} is now satisfied, so one might expect that 
these polynomials form an
orthogonal family. It is
stated in \cite{bender-1988a} that
\begin{equation}
P_{n}(z) = \frac{n!}{(2 \imath)^{n}} 
\pFq32{-n,n+1, \tfrac{1}{4} - \tfrac{\imath z}{2} }{\tfrac{1}{2}, 1 }{1}.
\label{rep-weyl}
\end{equation}
\noindent
To verify \eqref{rep-weyl} from \eqref{weyl-10} is equivalent to the identity
\begin{equation}
\sum_{k=0}^{n} \binom{n}{k} ( - \imath z + \tfrac{1}{2} -k )_{n} = 
(-1)^{n} n! 
\pFq32{-n,n+1, \tfrac{1}{4} - \tfrac{\imath z}{2} }{\tfrac{1}{2}, 1 }{1}.
\label{desired-1}
\end{equation}

The left-hand side of \eqref{desired-1} is written now in hypergeometric 
form. 

\begin{lemma}
For $n \in \mathbb{N}$
\begin{equation}
\sum_{k=0}^{n} \binom{n}{k} ( - \imath z + \tfrac{1}{2} - k )_{n} = 
( - \imath z - n + \tfrac{1}{2})_{n} \, \, 
\pFq21{-n,\tfrac{1}{2}- \imath z}{\tfrac{1}{2} -n - \imath z}{-1}.
\label{iden-hyper-1a}
\end{equation}
\end{lemma}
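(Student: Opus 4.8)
The plan is to strip away the disguise provided by the argument $\tfrac{1}{2}-\imath z$ and reduce the whole statement to a single bookkeeping identity about products of consecutive factors. Introduce the abbreviation $w=\tfrac{1}{2}-\imath z$, so that the left-hand side of \eqref{iden-hyper-1a} becomes the polynomial $\sum_{k=0}^{n}\binom{n}{k}(w-k)_{n}$, while the right-hand side reads $(w-n)_{n}\,\pFq21{-n,\,w}{w-n}{-1}$. With this notation the lemma asserts nothing more than that a certain binomial sum is a terminating ${}_{2}F_{1}$, and I would prove it by exhibiting the hypergeometric summand explicitly.

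First I would establish, for $0\le k\le n$, the factorization
\[
(w-k)_{n} = (w-n)_{n}\,\frac{(w)_{n-k}}{(w-n)_{n-k}} .
\]
This is purely a statement about blocks of consecutive factors. Writing $(w-n)_{n}=(w-n)(w-n+1)\cdots(w-1)$ and $(w-n)_{n-k}=(w-n)\cdots(w-k-1)$, the quotient $(w-n)_{n}/(w-n)_{n-k}$ is the block $(w-k)(w-k+1)\cdots(w-1)$ of $k$ factors; multiplying by $(w)_{n-k}=w(w+1)\cdots(w+n-k-1)$ appends the next $n-k$ factors and reassembles exactly the $n$ consecutive terms $(w-k)(w-k+1)\cdots(w+n-k-1)=(w-k)_{n}$. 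I would record this at the level of polynomials in $w$, so that the ratio $(w)_{n-k}/(w-n)_{n-k}$ causes no concern: both sides agree as rational functions in $w$ and both are polynomials, hence they coincide.

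Substituting the factorization into the left-hand side and pulling the common factor $(w-n)_{n}$ out of the sum gives $(w-n)_{n}\sum_{k=0}^{n}\binom{n}{k}\,(w)_{n-k}/(w-n)_{n-k}$. Reindexing by $m=n-k$, using $\binom{n}{n-m}=\binom{n}{m}$, and converting the binomial coefficient by $\binom{n}{m}=(-1)^{m}(-n)_{m}/m!$ turns this into
\[
(w-n)_{n}\sum_{m=0}^{n}\frac{(-n)_{m}\,(w)_{m}}{(w-n)_{m}\,m!}\,(-1)^{m},
\]
which is visibly $(w-n)_{n}\,\pFq21{-n,\,w}{w-n}{-1}$; restoring $w=\tfrac{1}{2}-\imath z$ yields \eqref{iden-hyper-1a}. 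I do not anticipate a real obstacle here: the only substantive content is the consecutive-product bookkeeping in the factorization step, and the sole point requiring care is to phrase that identity as one between polynomials in $w$ so that the intermediate rational expressions are legitimate. The reindexing and the passage from binomial coefficients to Pochhammer symbols are then automatic.
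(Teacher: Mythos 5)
Your proof is correct and takes essentially the same route as the paper's: the paper substitutes $z = \imath\left( m + \tfrac{1}{2} \right)$ (so $m+1$ plays exactly the role of your $w$), implicitly performs the same $k \mapsto n-k$ reindexing, and rests on the same block factorization of Pochhammer ratios, $\frac{(m+1-n+k)_{n}}{(m-n+1)_{n}} = \frac{(m+1)_{k}}{(m+1-n)_{k}}$, together with the binomial-to-Pochhammer conversion. Working directly with the formal variable $w=\tfrac{1}{2}-\imath z$ is only a minor streamlining that lets you bypass the paper's ``both sides are polynomials in $z$'' specialization step.
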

\begin{proof}
It suffices to verify 
\begin{equation}
\sum_{k=0}^{n} \binom{n}{k} (m+1-k)_{n} = 
(m-n+1)_{n}  \, \, 
\pFq21{-n,m+1}{m -n +1}{-1}
\end{equation}
\noindent
obtained from \eqref{iden-hyper-1a}
with $z = \imath ( m + \tfrac{1}{2} )$, as both sides are polynomials 
in $z$. This is accomplished by writing 
the left-hand side as 
\begin{equation}
\sum_{k=0}^{n} \frac{n!}{(n-k)! \, k!} \frac{(m+1-n+k)_{n}}{(m-n+1)_{n}}
\end{equation}
\noindent
and using
\begin{equation}
\frac{n!}{(n-k)!} = (-1)^{n} (-n)_{k} \text{ and }
\frac{(m+1-n+k)_{n}}{(m-n+1)_{n}} = \frac{(m+1)_{k}}{(m+1-n)_{k}}.
\end{equation}
\noindent
This proves the result.
\end{proof}

Therefore \eqref{rep-weyl} is equivalent to 
\begin{equation}
(- \imath z - n + \tfrac{1}{2} )_{n} \, 
\pFq21{-n,\tfrac{1}{2} - \imath z}{\tfrac{1}{2} -n - \imath z}{-1} = 
(-1)^{n} n! \, 
\pFq32{-n,n+1,\tfrac{1}{4} - \tfrac{\imath z}{2}}{\tfrac{1}{2},1}{1}.
\label{iden-99a}
\end{equation}
\noindent
This is proved by observing again that both sides are polynomials 
in $z$, so it suffices 
to verify \eqref{iden-99a}
when $z = \imath \left( m + \tfrac{1}{2} \right)$ and $m \in \mathbb{N}$. The 
identity becomes 
\begin{equation}
\label{form-hyp1}
\frac{(m-n+1)_{n}}{n!} \pFq21{-n, m+1}{m-n+1}{-1} = (-1)^{n} 
\pFq32{-n,n+1,\tfrac{m+1}{2}}{\tfrac{1}{2},1}{1}.
\end{equation}
\noindent
In detail, 
\begin{equation}
\frac{(m-n+1)_{n}}{n!} 
\sum_{k=0}^{n} \frac{(-1)^{k}(-n)_{k} (m+1)_{k}}{k! (m-n+1)_{k}} =
(-1)^{n} \sum_{k=0}^{n} 
\frac{(-n)_{k}(n+1)_{k} \left( \tfrac{m+1}{2} \right)_{k}}
{(1)_{k} k! \left( \tfrac{1}{2} \right)_{k}}.
\label{hahn-iden2}
\end{equation}
\noindent
The method of Wilf-Zeilberger (WZ) 
described in \cite{petkovsek-1996a} shows that 
both sides of \eqref{hahn-iden2}, with $m$ fixed, satisfy the recurrence 
\begin{equation}
(n+2)u(n+2,m) - (2m+1)u(n+1,m) - (n+1)u(n,m)=0.
\label{wz-1}
\end{equation}
Upon verifying the value for $n=0$ and $n=1$, the assertion 
\eqref{hahn-iden2} follows. 
\end{example}

\begin{note}
The result \eqref{form-hyp1} 
also follows from entries $102$ and $103$ in page $540$ of 
\cite{prudnikov-1998a}. The stated identity is 
\begin{equation}
\pFq32{-n,a,b}{\tfrac{a-n}{2},\tfrac{1+a-n}{2}}{1} =
\frac{(2b-a+1)_{n}}{(1-a)_{n}} 
\pFq21{-n,2b}{2b-a+1}{-1},
\end{equation}
\noindent
and \eqref{form-hyp1} is obtained by choosing 
$a=n+1, \, b = \tfrac{m+1}{2}$. 
\end{note}

\begin{note}
The left-hand side of \eqref{hahn-iden2} can be reduced to 
\begin{equation}
c(n,m) := \sum_{k=0}^{n} \binom{m+k}{m} \binom{m}{n-k} = 
\sum_{k=0}^{m} \binom{n+k}{m} \binom{m}{k}.
\end{equation}
\noindent 
The equality of the two versions of $c(n,m)$ follows from \eqref{wz-1}. These 
coefficients have remarkable properties which is a topic deferred to
\cite{amdeberhan-2013b}. 
\end{note}

\begin{note}
The hypergeometric representation of the polynomials $P_{n}(z)$ given in 
\eqref{rep-weyl} shows that $P_{n}(z)$ may be expressed in terms of the 
\textit{continuous Hahn polynomials}
\begin{equation*}
p_{n}(z;a,b,c,d) = \imath^{n} \frac{(a+c)_{n} (a+d)_{n}}{n!} 
\pFq32{-n,n+a+b+c+d-1,a+ \imath z}{a+c,a+d}{1}.
\end{equation*}
\noindent
The identity \eqref{rep-weyl} shows that
\begin{equation}
P_{n}(z) = \frac{(-1)^{n} 2^{n}}{\binom{2n}{n}} 
p_{n} \left( - \frac{z}{2}; \frac{1}{4}, \, \frac{3}{4}, \, \frac{1}{4}, 
\, \frac{3}{4}, \, n \right).
\end{equation}
\noindent
This is discussed in \cite{hamdi-2010a} and \cite{koorwinder-1989a}.

The continuous Hahn polynomials satisfy the orthogonality condition
\begin{multline*}
\frac{1}{2 \pi} 
\int_{-\infty}^{\infty} \Gamma(a + \imath z) 
\Gamma(b + \imath z) 
\Gamma(c - \imath z) 
\Gamma(d - \imath z) 
p_{n}(z;a,b,c,d) p_{m}(z;a,b,c,d) \, dz   \\
 = \delta_{n,m} 
\frac{
\Gamma(n+a+d) 
\Gamma(n+b+c) 
\Gamma(n+a+c) 
\Gamma(n+b+d) }
{ n! (2n+a+b+c+d-1) \Gamma(n + a+ b+c+d-1)}.
\end{multline*}
\noindent
In the special case appearing here, using the identity 
$\begin{displaystyle} 
\left| \Gamma \left( \tfrac{1}{2} + \imath z \right) \right|^{2} = 
\pi \text{ sech}(\pi z)\end{displaystyle}$, the polynomials $P_{n}(z)$ are 
orthogonal on $\mathbb{R}$ with weight function $w(z) = \text{sech}(\pi z)$
and generating function 
\begin{equation}
\sum_{n=0}^{\infty} \frac{(2t)^{n}}{n!} P_{n}(z) = 
\frac{\exp( 2 z) \, \text{ arctan } t}{\sqrt{1+t^{2}}}.
\end{equation}
\end{note}

\begin{example}
\label{bateman}
The second example corresponds to the coefficients 
$a_{n,k} = \binom{2n}{n}^{-1} \binom{n}{k}^{2}$. Theorem 
\ref{formula-pn} now gives 
\begin{equation}
P_{n}(z) = \frac{\imath^{n} (n!)^{3}}{(2n)!} 
\sum_{k=0}^{n} \binom{n}{k}^{2} \binom{-\imath z - \tfrac{1}{2} +k}{n}.
\label{poly-sqbin}
\end{equation}
\noindent
To identify this class of polynomials it is convenient to convert them 
to hypergeometric form to produce 
\begin{equation}
P_{n}(z) = \frac{\imath^{n} (n!)^{3}}{(2n)!} 
\pFq32{-n,-n,\tfrac{1}{2} - \imath z}{1,\tfrac{1}{2}-\imath z -n}{1},
\label{poly-hyper0}
\end{equation}
\noindent
as remarked in \cite{bender-1988a}.

An alternative form of these polynomials is given next. 

\begin{lemma}
\label{poly-bateman}
The polynomials $P_{n}(z)$ in \eqref{poly-hyper0} are given by 
\begin{equation}
P_{n}(z) = \frac{(-\imath)^{n} (n!)^{3}}{(2n)!}
\pFq32{-n,n+1,\tfrac{1}{2} - \imath z}{1,1}{1}.
\end{equation}
\end{lemma}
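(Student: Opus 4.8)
The plan is to collapse the claimed equality of the two expressions for $P_n(z)$ into a single terminating hypergeometric identity, and then to settle that identity by polynomial interpolation together with the WZ method already used repeatedly in this paper. I would start from the manifestly polynomial form \eqref{poly-sqbin} and write $c = \tfrac{1}{2} - \imath z$, so that $-\imath z - \tfrac{1}{2} + k = c-1+k$. Cancelling the common prefactor $\imath^{n}(n!)^{3}/(2n)!$ from both sides and using $(-\imath)^{n} = (-1)^{n}\imath^{n}$ reduces the Lemma to
\begin{equation*}
\sum_{k=0}^{n} \binom{n}{k}^{2} \binom{c-1+k}{n} = (-1)^{n} \pFq32{-n,n+1,c}{1,1}{1}.
\end{equation*}
Both sides are polynomials in $z$ (equivalently in $c$) of degree at most $n$: on the left each $\binom{c-1+k}{n}$ has degree $n$, while on the right the terminating series contributes only the factors $(c)_{k}$ with $k \le n$.

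Since two polynomials of degree at most $n$ that agree at infinitely many points must coincide, it suffices to verify the displayed identity at the values $z = \imath\left(m+\tfrac{1}{2}\right)$, $m = 0,1,2,\dots$, for which $c = m+1$ and $c-1+k = m+k$. This converts the problem into the purely arithmetic statement
\begin{equation*}
\sum_{k=0}^{n} \binom{n}{k}^{2} \binom{m+k}{n} = (-1)^{n} \pFq32{-n,n+1,m+1}{1,1}{1}, \qquad n,m \in \mathbb{N}.
\end{equation*}
As a base of the argument and a sanity check I would verify the cases $n=0$ (both sides equal $1$) and $n=1$ (both sides equal $2m+1$) directly.

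To prove the arithmetic identity I would treat $m$ as a parameter and apply creative telescoping (Zeilberger's algorithm) to each side as a sum over $k$, obtaining in each case a linear recurrence in $n$. The Bateman-type series on the right is the terminating hypergeometric form of a genuine orthogonal family and hence satisfies a three-term recurrence in $n$; the real task is to check that the left-hand sum $L(n,m)=\sum_{k} \binom{n}{k}^{2}\binom{m+k}{n}$ obeys the identical recurrence, after which the verified cases $n=0,1$ close the induction. I expect the main obstacle — and the reason for passing to integer $c=m+1$ at the very beginning — to be that in the original $z$-dependent form \eqref{poly-hyper0} the lower hypergeometric parameter $\tfrac{1}{2}-\imath z-n$ varies with $z$, which clutters the WZ certificate; specializing to $c=m+1$ turns that parameter into the ordinary integer $m+1-n$, so both sides become honestly hypergeometric in $(n,m)$ and the telescoping recurrences are routine to produce and compare. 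An alternative would be to look for a single terminating ${}_3F_2$ transformation (a Thomae or Whipple reversal) carrying one side to the other; I would keep this only as a backup, since reversing the order of summation in the right-hand series leaves a $k$-dependent factor in the denominator, so the reversal does not land cleanly on the left-hand form, making the recurrence comparison the more reliable line of attack.
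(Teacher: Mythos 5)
Your proposal is correct and follows essentially the same route as the paper's own proof: reduce to the specialization $z = \imath\left(m+\tfrac{1}{2}\right)$, where the claim becomes the integer identity $\sum_{k=0}^{n}\binom{n}{k}^{2}\binom{m+k}{n} = (-1)^{n}\sum_{k=0}^{n}(-1)^{k}\binom{n}{k}\binom{n+k}{k}\binom{m+k}{k}$, verify $n=0,1$, and show by the WZ method that both sides satisfy the same recurrence (the paper's is $(n+2)^{2}u(n+2,m) + (2m+1)(2n+3)u(n+1,m) - (n+1)^{2}u(n,m) = 0$, second order, so your two base cases suffice). The only cosmetic difference is that you start from the binomial-sum form \eqref{poly-sqbin} rather than the hypergeometric form \eqref{poly-hyper0}, which the paper treats as interchangeable.
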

\begin{proof}
Both sides are polynomials in $z$, so it suffices to verify the identity 
\begin{equation}
\pFq32{-n,n+1,m+1}{1,1}{1} = (-1)^{n} \binom{m}{n} 
\pFq32{-n,-n,m+1}{1,m-n+1}{1},
\end{equation}
\noindent
obtained from the value $z = \left( m + \tfrac{1}{2} \right) \imath$.  
This may be written in the equivalent form 
\begin{equation}
\sum_{k=0}^{n} (-1)^k \binom{n+k}{k} \binom{m+k}{k} \binom{n}{k} 
= (-1)^n \sum_{k=0}^{n} \binom{n}{k}^{2} \binom{m+k}{n}.
\end{equation}
\noindent
A direct calculation shows that the identity holds for $n=0, \, 1$ and the 
WZ-method shows that both sides satisfy the recurrence
\begin{equation}
(n+2)^{2}u(n+2,m) + (2m+1)(2n+3)u(n+1,m) - (n+1)^{2}u(n,m) = 0.
\end{equation}
\end{proof}
\end{example}

\begin{note}
The polynomials in Lemma \ref{poly-bateman}
can be expressed in terms of the 
\textit{Bateman polynomials} (see Section $18.19$ in 
 \cite{olver-2010a}): 
\begin{equation}
F_{n}(z) = \pFq32{-n,n+1,\frac{1+z}{2}}{1,1}{1} \text{ as }
P_{n}(z) = \frac{(- \imath)^{n} (n!)^{3}}{(2n)!} F_{n}(-2 \imath z).
\end{equation}
\noindent
It is curious that these seem to be the only powers of binomial coefficients
that give orthogonal polynomials. The powers up to $50000$ have been 
excluded using the first condition in Lemma \ref{ortho-1a}.
\end{note}

\section{Pyramid for the monomial $z^{n}$}
\label{sec-monomial}

For enumerative purposes, denote 
$[n]:=\{1,\dots,n\}$ and $[-n,n]:=\{\pm1,\dots,\pm n\}$. The \it symmetric 
group \rm $S_n$ is the set of
permutations of $[n]$. The \it signed permutation \rm group $B_n$ (or, 
\it hyperoctahedral group\rm) is the permutations
$\pi$ of $[-n,n]$, provided  $\pi(-k)=-\pi(k)$. In the 
literature, $S_n$ (resp. $B_n$) is a Coxeter group type $A$
(resp. type $B$). A \it descent \rm is a position $k$ where the permutation 
value has decreased: $\pi(k-1)>\pi(k)$.
Convention: $\pi(0):=0$. The classical Eulerian sequence $A_{n,k}$ of 
type $A$ (resp. Eulerian sequence
$B_{n,k}$ of type $B$) enumerates $S_n$ (resp. $B_n$) with $k$ descents. The 
corresponding Eulerian polynomials
of type $A$ and type $B$ are defined, respectively, by 
$A_n(x)=\sum_{k=0}^nA_{n,k}x^k$ and $B_n(x)=\sum_{k=0}^nB_{n,k}x^k$.

The polynomials $A_n(x)$ and $B_n(x)$ have the following rational generating 
functions:
$$\frac{A_n(x)}{(1-x)^{n+1}}=\sum_{k=0}^{\infty}(k+1)^nx^k, \qquad 
\text{and} \qquad
\frac{B_n(x)}{(1-x)^{n+1}}=\sum_{k=0}^{\infty}(2k+1)^nx^k.$$

As a direct consequence, we find the connections
$$B_{2n}(x)=(1-x)^nA_n(x), \qquad\text{and}\qquad 
B_{2n+1}(x)=(1-x)^nA_{n+1}(x).$$

Several authors (see \cite{amdeberhan-2013c,beck-2012a,stanley-1988a}) 
considered some \it quantum \rm 
extensions of Eulerian polynomials.

The pyramid corresponding to the polynomials $P_{n}(z) = z^{n}$ is given 
by Theorem \ref{formula-an} as 
\begin{equation}
a_{n,k} = \frac{1}{n!2^{n}} \sum_{j=0}^{n-k} (-1)^{n-k-j} 
\binom{n+1}{n-k-j} (2j+1)^{n}.
\end{equation}
\noindent
Define $B_{n,k} = 2^{n}n!a_{n,k}$. The rest of the section shows that 
$B_{n,k}$ are the coefficients of type $B$ polynomials.

\begin{lemma}
The numbers $B_{n,k}$ are integers with bivariate generating function
\begin{equation}
\sum_{n=0}^{\infty} \left( \sum_{k=0}^{n} B_{n,k}x^{k} 
\right) \frac{z^{n}}{n!} = 
\frac{(1-x) e^{(1-x)z}}{1 - xe^{2z(1-x)}}.
\label{bnk-gf}
\end{equation}
\noindent
Letting $x \to 1$ shows that 
$\{ a_{n,k} \}$ are normalized; that is 
$a_{n,0} + a_{n,1} + \cdots + a_{n,n} = 1$.
\end{lemma}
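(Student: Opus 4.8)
The plan is to identify the integers $B_{n,k} = 2^{n}n!\,a_{n,k}$ with the coefficients of the type $B$ Eulerian polynomials $B_{n}(x)$ recorded earlier in the section, and then extract all three assertions from that identification. Integrality is cheapest and I would dispatch it first directly from the formula: since $B_{n,k} = \sum_{j=0}^{n-k}(-1)^{n-k-j}\binom{n+1}{n-k-j}(2j+1)^{n}$ is a $\mathbb{Z}$-linear combination of integer powers, each summand is an integer and hence so is $B_{n,k}$, with no generating-function machinery needed.

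For the exponential generating function, I would first reindex the inner sum by setting $i=n-k-j$, which rewrites the formula as $B_{n,k}=\sum_{i=0}^{n-k}(-1)^{i}\binom{n+1}{i}\bigl(2(n-k-i)+1\bigr)^{n}$. This is precisely the coefficient of $x^{n-k}$ in the Cauchy product $(1-x)^{n+1}\sum_{m\ge0}(2m+1)^{n}x^{m}$, and by the rational generating function $B_{n}(x)/(1-x)^{n+1}=\sum_{m\ge0}(2m+1)^{n}x^{m}$ this product is exactly $B_{n}(x)$. Thus $B_{n,k}=[x^{n-k}]B_{n}(x)$, and summing over $k$ yields $\sum_{k=0}^{n}B_{n,k}x^{k}=x^{n}B_{n}(1/x)$.

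The bivariate generating function is then a direct computation. Substituting $B_{n}(1/x)=(1-1/x)^{n+1}\sum_{m\ge0}(2m+1)^{n}x^{-m}$ and exchanging the orders of summation, the sum over $n$ collapses the exponential series $\sum_{n}\tfrac{(z(x-1))^{n}}{n!}(2m+1)^{n}=e^{z(x-1)(2m+1)}$, after which the sum over $m$ is geometric with ratio $x^{-1}e^{2z(x-1)}$. Summing that geometric series produces $\dfrac{(x-1)\,e^{z(x-1)}}{x-e^{2z(x-1)}}$; multiplying numerator and denominator by $-e^{2z(1-x)}$ (equivalently, replacing $x-1$ by $-(1-x)$ throughout and regrouping) turns this into the claimed $\dfrac{(1-x)e^{(1-x)z}}{1-xe^{2z(1-x)}}$. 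The one genuinely delicate point is this last rearrangement: it silently encodes the palindromic symmetry $x^{n}B_{n}(1/x)=B_{n}(x)$ of the type $B$ Eulerian polynomials, so I would either justify it by the explicit manipulation just described or invoke the standard symmetry $B_{n,k}=B_{n,n-k}$.

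Finally, the normalization follows by letting $x\to1$ in the identity just established. Putting $\varepsilon=1-x$ and expanding to first order, the numerator $(1-x)e^{(1-x)z}$ is $\varepsilon+O(\varepsilon^{2})$ while the denominator $1-xe^{2z(1-x)}$ is $\varepsilon(1-2z)+O(\varepsilon^{2})$, so the right-hand side tends to $\dfrac{1}{1-2z}=\sum_{n\ge0}2^{n}n!\,\dfrac{z^{n}}{n!}$. Matching coefficients of $z^{n}/n!$ with the left-hand side gives $\sum_{k=0}^{n}B_{n,k}=2^{n}n!$, and dividing by $2^{n}n!$ yields $a_{n,0}+\cdots+a_{n,n}=1$. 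I expect the main obstacle to be organizational rather than substantive: keeping the triple index $(n,k,j)$ straight through the reindexing, and handling the $0/0$ limit at $x=1$ correctly, are the only places where care is required.
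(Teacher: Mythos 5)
Your proposal is correct: the reindexing $i=n-k-j$ identifies $B_{n,k}$ with $[x^{n-k}]B_n(x)$ via the rational generating function $B_n(x)/(1-x)^{n+1}=\sum_{m\ge 0}(2m+1)^n x^m$, the exchange of sums and geometric-series step give $\frac{(x-1)e^{z(x-1)}}{x-e^{2z(x-1)}}$, the final multiplication by $-e^{2z(1-x)}$ is a legitimate algebraic identity that yields the stated form (no separate appeal to the symmetry $B_{n,k}=B_{n,n-k}$ is actually needed), and the $\varepsilon=1-x$ expansion correctly produces $\frac{1}{1-2z}$ and hence $\sum_k B_{n,k}=2^n n!$. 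The paper offers no written proof of this lemma -- it is stated right after recording the type $B$ Eulerian rational generating function, with the announcement that the $B_{n,k}$ are the type $B$ Eulerian coefficients -- and your argument is exactly that intended identification, so it follows the paper's approach.
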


Letting $x = - 1$ in \eqref{bnk-gf} gives a relation between the 
numbers $B_{n,k}$ and  the 
\textit{Euler numbers} $E_{n}$ defined by the generating function 
$\begin{displaystyle} 
\sum_{n=0}^{\infty} E_{n} \frac{z^{n}}{n!} = \frac{1}{\cosh z}. 
\end{displaystyle}$

\begin{corollary}
The numbers $B_{n,k}$ satisfy 
$\quad \begin{displaystyle} \sum_{k=0}^{n} (-1)^{k} B_{n,k} = 2^{n} E_{n}. 
\end{displaystyle}$
\end{corollary}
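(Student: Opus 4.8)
The plan is to evaluate the bivariate generating function \eqref{bnk-gf} at $x=-1$ and to recognize the resulting one-variable series as the generating function of the Euler numbers after the rescaling $z \mapsto 2z$. Since the left-hand side of \eqref{bnk-gf} is a power series in $z$ whose coefficients $\sum_{k=0}^{n} B_{n,k}x^{k}$ are polynomials in $x$, the substitution $x=-1$ is legitimate and produces on the left exactly the series $\sum_{n\ge 0}\bigl(\sum_{k=0}^n(-1)^k B_{n,k}\bigr)\tfrac{z^n}{n!}$, which is the quantity to be identified.

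First I would simplify the right-hand side at $x=-1$. With $1-x = 2$ and $2(1-x)=4$, the fraction becomes $\dfrac{2e^{2z}}{1-(-1)e^{4z}} = \dfrac{2e^{2z}}{1+e^{4z}}$. Dividing numerator and denominator by $e^{2z}$ turns this into $\dfrac{2}{e^{-2z}+e^{2z}} = \dfrac{1}{\cosh(2z)}$. This is the single computational step, and it is routine; I expect no real obstacle beyond keeping track of the sign in the denominator, which becomes $1+e^{4z}$ rather than $1-e^{4z}$.

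Next I would expand $\dfrac{1}{\cosh(2z)}$ using the defining generating function $\sum_{n\ge 0}E_n\tfrac{z^n}{n!} = 1/\cosh z$ of the Euler numbers, evaluated with $z$ replaced by $2z$. This yields $\dfrac{1}{\cosh(2z)} = \sum_{n\ge0}E_n\dfrac{(2z)^n}{n!} = \sum_{n\ge0}2^nE_n\dfrac{z^n}{n!}$, so the entire right-hand side is now expressed in the same basis $\{z^n/n!\}$ as the left-hand side.

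Finally, equating the coefficients of $z^n/n!$ on the two sides gives $\sum_{k=0}^n(-1)^kB_{n,k} = 2^nE_n$, as claimed. The only point worth checking is that the substitution $x=-1$ commutes with coefficient extraction, which holds because each inner sum $\sum_k B_{n,k}x^k$ is a genuine polynomial (here $B_{n,k}=0$ for $k>n$), and no convergence issue arises since the whole argument is an identity of formal power series in $z$.
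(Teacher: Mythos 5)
Your proof is correct and is exactly the paper's intended argument: the text preceding the corollary says the result follows by letting $x=-1$ in \eqref{bnk-gf}, which is precisely your substitution yielding $\tfrac{2e^{2z}}{1+e^{4z}}=\tfrac{1}{\cosh(2z)}=\sum_{n\ge 0}2^nE_n\tfrac{z^n}{n!}$. You have simply supplied the routine details the paper omits, including the (correct) remark that the substitution is legitimate at the level of formal power series.
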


The numbers $B_{n,k}$ resemble the \textit{Eulerian numbers} $\left\langle 
\begin{matrix} n \\ k \end{matrix} \right\rangle$, the coefficients of the 
type $A$ polynomials, with 
bivariate generating function  
\begin{equation}
\sum_{n=0}^{\infty} \left( \sum_{k=0}^{n} 
\left\langle \begin{matrix} n \\ k \end{matrix} \right\rangle
x^{k} 
\right) \frac{z^{n}}{n!} = 
\frac{(1-x) e^{(1-x)z}}{1 - xe^{z(1-x)}}.
\label{enk-gf}
\end{equation}
\noindent 
The similarity extends to the explicit expressions 
\begin{equation}
B_{n,n-k} = (-1)^{n} \sum_{j=0}^{k} (-1)^{j} \binom{n+1}{2k+1-j} 
(2k+1-2j)^{n}
\end{equation}
\noindent
and 
\begin{equation}
\left\langle \begin{matrix} n \\ k \end{matrix} \right\rangle = 
\sum_{j=0}^{k} (-1)^{j} \binom{n+1}{j} 
(k+1-j)^{n}.
\end{equation}

\medskip

\noindent
\textbf{Acknowledgments}. The fourth author acknowledges the partial 
support of NSF-DMS 1112656. The third author is a post-doctoral fellow 
funded in part by the same grant.


\begin{thebibliography}{10}

\bibitem{amdeberhan-2013c}
T.~Amdeberhan.
\newblock Theorems, problems and conjectures.
\newblock Available at \newline
  \url{http://www.math.tulane.edu/~tamdeberhan/conjectures.html}.

\bibitem{amdeberhan-2013b}
T.~Amdeberhan, V.~De~Angelis, A.~Dixit, V.~Moll, and C.~Vignat.
\newblock Arithmetic properties of a sequence arising from operator orderings.
\newblock {\em In preparation}.

\bibitem{beck-2012a}
M.~Beck and B.~Braun.
\newblock Euler-{M}ahonian statistics via polyhedral geometry.
\newblock {\em ArXiv:1109.3353v2}, 2012.

\bibitem{bender-1988a}
C.~M. Bender and G.~V. Dunne.
\newblock Polynomials and operator orderings.
\newblock {\em J. Math. Phys.}, 29:1727--1731, 1988.

\bibitem{blasiak-2008a}
P.~Blasiak, G.~H.~E. Duchamp, A.~Horzela, K.~A. Penson, and A.~I Solomon.
\newblock Heisenberg-{W}eyl algebra revisited: combinatorics of words and
  paths.
\newblock {\em J. Phys. A.}, 41:415204, 2008.

\bibitem{graham-1994a}
R.~Graham, D.~Knuth, and O.~Patashnik.
\newblock {\em Concrete {M}athematics}.
\newblock Addison Wesley, Boston, 2nd edition, 1994.

\bibitem{hamdi-2010a}
A.~Hamdi and J.~Zeng.
\newblock Orthogonal polynomials and operator orderings.
\newblock {\em J. Math. Phys.}, 51:043506, 2010.

\bibitem{koorwinder-1989a}
T.~Koorwinder.
\newblock Meixner-{P}ollaczek polynomials and the {H}eisenberg algebra.
\newblock {\em J. {M}ath. {P}hys.}, 30:767--769, 1989.

\bibitem{navon-1973a}
A.~M. Navon.
\newblock Combinatorics and {F}ermion algebra.
\newblock {\em Riv. {N}uovo {C}imento B}, 16:324--330, 1973.

\bibitem{olver-2010a}
F.~W.~J. Olver, D.~W. Lozier, R.~F. Boisvert, and C.~W. Clark, editors.
\newblock {\em {NIST} {H}andbook of {M}athematical {F}unctions}.
\newblock Cambridge {U}niversity {P}ress, 2010.

\bibitem{petkovsek-1996a}
M.~Petkov\v{s}ek, H.~Wilf, and D.~Zeilberger.
\newblock {\em A=B}.
\newblock A. K. Peters, Ltd., 1st edition, 1996.

\bibitem{prudnikov-1998a}
A.~P. Prudnikov, Yu.~A. Brychkov, and O.~I. Marichev.
\newblock {\em Integrals and Series}, volume $3$: More special functions.
\newblock Gordon and Breach Science Publishers, 1998.

\bibitem{stanley-1988a}
R.~Stanley.
\newblock Differential posets.
\newblock {\em Journal of AMS}, 1:919--961, 1988.

\bibitem{varvak-2005a}
A.~Varvak.
\newblock Rook numbers and the normal ordering problem.
\newblock {\em Journal of Comb. Theory Series A}, 112:292--307, 2005.

\end{thebibliography}
\end{document}